\newtheorem{thm}{Theorem}[section]
\newtheorem{lemma}[thm]{Lemma}
\newtheorem{prop}[thm]{Proposition}
\theoremstyle{definition}
\theoremstyle{remark}
\newtheorem{rmk}[thm]{Remark}
\newcommand{\E}{\mathbb{E}}
\newcommand{\cF}{\mathscr{F}}
\newcommand{\enne}{\mathbb{N}}
\renewcommand{\P}{\mathbb{P}}
\newcommand{\erre}{\mathbb{R}}
\newcommand{\ind}[1]{\mathbbm{1}_{#1}}
\DeclarePairedDelimiter\abs{\lvert}{\rvert}
\DeclarePairedDelimiter\norm{\lVert}{\rVert}
\DeclarePairedDelimiterX\ip[2]{\langle}{\rangle}{#1,#2}
\begin{document}

\title{Nonparametric estimates of option prices via\\
  Hermite basis functions} 

\author{Carlo Marinelli\thanks{Department of Mathematics, University
    College London, Gower Street, London WC1E 6BT, UK.} \and Stefano
  d'Addona\thanks{Dipartimento di Scienze Politiche, Universit\`a di
    Roma Tre, Via G.~Chiabrera 199, 00145 Rome, Italy.}}

\date{\normalsize August 12, 2023}
\maketitle

\begin{abstract}
  We consider approximate pricing formulas for European options based
  on approximating the logarithmic return's density of the underlying
  by a linear combination of rescaled Hermite polynomials. The
  resulting models, that can be seen as perturbations of the classical
  Black-Scholes one, are nonpararametric in the sense that the
  distribution of logarithmic returns at fixed times to maturity is
  only assumed to have a square-integrable density. We extensively
  investigate the empirical performance, defined in terms of
  out-of-sample relative pricing error, of this class of approximating
  models, depending on their order (that is, roughly speaking, the
  degree of the polynomial expansion) as well as on several ways to
  calibrate them to observed data. Empirical results suggest that such
  approximate pricing formulas, when compared with simple
  nonparametric estimates based on interpolation and extrapolation on
  the implied volatility curve, perform reasonably well only for
  options with strike price not too far apart from the strike prices
  of the observed sample.
\end{abstract}

\section{Introduction}
Our aim is to construct approximate pricing formulas for European
options with fixed time to maturity by series expansion of return
distributions, to discuss their implementation, and to test their
empirical accuracy. The approach is nonparametric in the sense that we
do not make any parametric assumption on the distribution of
returns. Such distribution is instead approximated by (the integral
of) a truncated series of suitably weighted and scaled Hermite
polynomials, in such a way that the zeroth-order approximation
coincides with the standard Black-Scholes model.

Let $(\Omega,\cF,\P)$ be a probability space endowed with a filtration
$(\cF_t)_{t \geq 0}$, on which all random elements will be
defined. Let $\widehat{S}\colon \Omega \times \erre_+ \to \erre$ be
the adapted price process of a dividend-paying asset, with adapted
dividend process $q\colon \Omega \times \erre_+ \to \erre$ such that
$\exp\bigl(\int_0^t q_s\,ds\bigr)$ is bounded for every
$t \in \erre_+$, and let
$\widehat{Y} \colon \Omega \times \erre_+ \to \erre$ be the
corresponding adapted yield process defined by
\[
  \widehat{Y}_t := S_t + \int_0^t q_s\widehat{S}_s\,ds
  \qquad \forall t \in \erre_+.
\]
Denoting by $\beta$ the adapted continuous strictly positive price
process of the riskless cash account, i.e.
\[
  \beta_t = \exp\biggl( \int_0^t r_s\,ds \biggr) \qquad \forall t \in
  \erre_+,
\]
with $r$ the risk-free rate, and by $S:=\beta^{-1}\widehat{S}$ the
discounted asset price, we assume that the discounted yield process
$Y$ defined by
\[
Y_t = S_t + \int_0^t q_sS_s\,ds \qquad \forall t \in \erre_+
\]
is a martingale. In other words, we assume that $\P$ is a martingale
measure.  The martingale property of $Y$ implies that the process
$S\exp\bigl( \int_0^\cdot q_s\,ds \bigr)$ is also a martingale (see,
e.g., \cite{cm:JEF17} for details). In the Black-Scholes setting, one
assumes that there exists a constant $\sigma>0$ such that
\[
  S_t\exp\biggl( \int_0^t q_s\,ds \biggr) = S_0 \exp\biggl( \sigma W_t -
  \frac12 \sigma^2 t \biggr)
\]
for every $t \geq 0$, where $W$ is a standard Wiener
process. Therefore, denoting by $Z$ a standard Gaussian random
variable, one has
\[
  S_t\exp\biggl( \int_0^t q_s\,ds \biggr)
  = S_0 \exp\Bigl( \sigma\sqrt{t}
Z - \frac12 \sigma^2 t \Bigr)
\]
in law for every $t \geq 0$. Assuming for simplicity that $q$ is
constant, this implies
\begin{align*}
\E g(S_t) 
&= \E g\bigl( S_0 \exp \bigl( \sigma\sqrt{t}Z - \sigma^2 t/2 - qt \bigr)\\
&= \int_\erre g\bigl( S_0 \exp \bigl( \sigma\sqrt{t}x - \sigma^2 t/2
- qt \bigr) \phi(x)\,dx,
\end{align*}
where $\phi$ is the density of the standard Gaussian measure $\gamma$
on $\erre$ and $g\colon \erre \to \erre$ is any measurable function
either positive or such that the above integral is finite. As is well
known, this identity yields, as particular cases, the Black-Scholes
formula for put and call options\footnote{We consider only European
  options, and put and call options are always meant to be so-called
  vanilla options.}, as well as the Black-Scholes PDE (see, e.g.,
\cite{FoellSch}).

Empirical evidence suggests that observed option prices are not
compatible with such a model, and a vast literature exists about
alternative models that offer a better accuracy for pricing
purposes. A simple nonparametric approach consists in the following
steps: estimate the implied volatility from a set of observed call and
put option prices; view the implied volatility surface as a function
of (at least) the time to maturity and the strike price, say
$v\colon [t_0,t_1] \times [k_0,k_1] \to \erre_+$; given an option with
strike price $k$ and time to maturity $t$, obtain an estimate
$\hat{v}(t,k)$ of the corresponding volatility by interpolation on
$v$; obtain an estimate of the option price in the Black-Scholes
setting with volatility $\hat{v}(t,k)$. This is reasonable if the
point $(t,k)$ belongs to the convex envelope of the set of points
$(t_i,k_i)$ for which option prices are observed, and it was shown to
perform well in practice in \cite{cm:JEF17}.

Restricting to the case where the time to maturity $t$ is fixed,
another approach consists in the estimation of the law of the return
over the interval $[0,t]$, and to integrate with respect to the
estimated law to obtain estimates of option prices.
To fix ideas, discarding dividends for simplicity, one could write
$S_t = S_0 e^{R}$, where $R$ is the logarithmic return over $[0,t]$,
and, assuming that $R$ has a density $f^R$,
\[
\E g(S_t) = \int_\erre g\bigl( S_0 e^x \bigr) f^R(x)\,dx.
\]
In order to proceed in a nonparametric way, i.e. without assuming
that $f_R$ belongs to a family of density functions indexed by
finitely many parameters, a possibility is to assume that $f_R$
belongs to $L^2(\erre)$, to expand it as a series with respect to a
complete orthonormal basis, and to use as approximation a truncation
of the series to a finite sum. For instance, Lemma~\ref{lm:sigma}
below yields, for any parameters $m$ and $\sigma>0$,
\[
  f^R(x) = \sum_{n=0}^\infty \alpha_n h_n
  \Bigl( \frac{\sqrt{2}(x-m)}{\sigma} \Bigr)
  \exp\Bigl( -\frac{(x-m)^2}{2\sigma^2} \Bigr)
\]
as an identity of functions in $L^2(\erre)$, where $h_n$ is the $n$-th
Hermite polynomial and
\[
\alpha_n := \frac{1}{n!\,\sigma\sqrt{\pi}} \int_\erre f^R(x) 
h_n\bigl(\sqrt{2}(x-m)/\sigma\bigr)
e^{-(x-m)^2/2\sigma^2}\,dx.
\]
Introducing the random variable $X$ defined by $R = \sigma X + m$, so
that $S_t = S_0\exp(\sigma X+m)$, and denoting the density of $X$ by
$f$, this is equivalent to writing
\begin{equation}
  \label{eq:op}
  \E g(S_t) = \int g(S_0 e^{\sigma x+m}) f(x)\,dx
\end{equation}
and approximating $f$ by
\[
f_N(x) = \sum_{n=0}^N \alpha_n h_n(\sqrt{2}x) e^{-x^2/2},
\]
with
\[
\alpha_n := \frac{1}{n!\sqrt{\pi}} \int_\erre f(x) h_n(\sqrt{2}x)
e^{-x^2/2}\,dx.
\]
The coefficients $m, \sigma$ as well as $\alpha_0,\ldots,\alpha_N$ can
then be calibrated by minimizing a distance between observed prices
and ``approximate'' prices implied by replacing $f$ with $f_N$ in equation
\eqref{eq:op}. Several procedures to achieve this are discussed in Section
\ref{sec:cal}.  Moreover, note that a zero-th order approximation of
$f$ reduces to Black-Scholes pricing, choosing
$\sigma=\sigma_0\sqrt{t}$ and $m=-\sigma_0^2 t/2$, with $\sigma_0$ the
volatility of the underlying. Expansion in Hermite polynomials have
already been used to approximate densities of financial returns (with
fixed time) in diffusion and jump-diffusion models (see,~e.g.,
\cite{Xiu} and references therein), but we are not aware of any
previous work where the natural nonparametric Ansatz proposed here is
studied. On the other hand, a somewhat related, short study using
simulated index prices and other families of orthogonal polynomials
can be found in \cite{GriHaeSchi}.

Our main interest is to test the empirical performance of the
approximate pricing approach described above, dubbed Hermite pricing
for convenience, investigating its dependence on several factors, such
as the number of Hermite polynomials used, the calibration procedure,
the corresponding optimization algorithm, and so on. We shall consider
as benchmark a simple nonparametric pricing technique based on the
Black-Scholes model and interpolation on the implied volatility
curve. It was shown in \cite{cm:JEF17} that this simple method
outperforms more sophisticated techniques based on estimating the
density of logarithmic returns by second derivative of call prices
with respect to strike results (cf.,~e.g., \cite{AL}), as well as some
parametric methods. It seems therefore sufficient to use just this
simple technique as term of comparison.

The extensive empirical study conducted here suggests that Hermite
pricing performs reasonably well for options with strike price not too
far away from the strike prices of observed option prices, and is
quite unreliable otherwise. This appears to be the case across all
calibration methods used, even though some techniques are more robust
than others.  Such an observation is certainly not surprising: most
(nonparametric) methods generally suffer, roughly speaking, of poor
performance on points that lie outside the convex hull of the observed
data set, or, more generally, on regions of the data set that are
``sparsely populated''. On the other hand, Hermite polynomials are
defined on the whole real line, so once the coefficients of a linear
combination of them are estimated, the model can in principle produce
estimates for any data points, without resorting on extrapolation, as
is the case for the elementary implied volatility method already
mentioned. Even on rather rich data sets, however, estimates obtained
by Hermite pricing are often unreliable. In essence, we believe that
one can reasonably conclude that Hermite pricing can usefully
complement other pricing techniques, but it is not a plausible tool to
price (nonparametrically) ``outside the convex hull'' of observed
data points.
The qualitative results of the empirical analysis on real data are
essentially confirmed by an analogous statistical exercise conducted
on a smaller synthetic dataset generated using (non-Gaussian) Hermite
processes.

The content of the remaining part of the text is organized as follows:
in Section \ref{sec:prel} we collect some facts about Hermite polynomials,
compute some integrals with respect to Gaussian measures (on the real
line), and we recall the connection among European option prices,
distributions of returns, and implied volatility.
Pricing estimates for put options, essentially in closed form, implied
by approximating the density of returns with finite linear
combinations of rescaled Hermite polynomials are discussed in
Section \ref{sec:Hp}. Corresponding formulas for call options can be
formally obtained in a very similar way, but the payoff function of
put options is bounded, while the payoff function of call options is
not. For this (technical) reason we concentrate on the case of put
options.
The important issue of calibration is discusses in
Section \ref{sec:cal}. The main criterion is the minimization of the
relative pricing error, both in $\ell^2$ and in $\ell^1$ sense
(corresponding to least squares and least absolute deviation,
respectively). While the objective functions are smooth, in general
there is no convexity, so global optimization is hard. Explicit
expressions for the minimum points cannot be obtained, so numerical
minimization is needed.
An extensive empirical analysis is carried out in Section
\ref{sec:emp}: for fixed time and time to maturity, we use a set of
option prices to calibrate the model, the pricing estimates of which
are in turn compared with actual option prices. 
A similar analysis is carried out on a set of synthetic data in
Section \ref{sec:emps}, where functionals obtained from the payoff
function of put options are applied to exponentials of simulated
Hermite processes.
Finally, auxiliary material is collected in the appendix.

\section{Preliminaries}
\label{sec:prel}
\subsection{Notation}
The usual Lebesgue spaces $L^p(\erre)$, $p \in [1,\infty]$, will
simply be denoted by $L^p$. The scalar product in $L^2$ will be
denoted by $\ip{\cdot}{\cdot}$. We shall use the same symbol for the
scalar product in other spaces, whenever it is clear from the context
what is meant. Given a countable set of indices $I$, we shall use
standard notation for the usual sequence spaces $\ell^p(I)$, defined
as the set of sequences $x = (x_i)_{i \in I}$ such that
\[
  {\norm{x}}_{\ell^p(I)} = \Bigl( \sum_{i \in I} \abs{x_i}^p \Bigr)^{1/p}
  < \infty.
\]
Whenever $I$ is omitted, it is either $\mathbb{Z}_+$ or a finite set
clear from the context.

\subsection{Gaussian measures and Hermite polynomials}
For any real numbers $m$ and $\sigma$, with $\sigma \neq 0$, let
$\gamma_{m,\sigma}$ denote the Gaussian measure on $\erre$ with mean
$m$ and variance $\sigma^2$, that is the measure having density with
respect to Lebesgue measure given by
\[
x \mapsto \frac{1}{\sigma\sqrt{2\pi}} e^{(s-m)^2/2\sigma^2}.
\]
If $m=0$ and $\sigma=1$, we shall just write $\gamma$ in place of
$\gamma_{0,1}$.

The Hermite polynomials $(h_n)_{n \geq 0}$, defined by
\[
h_n(x) := (-1)^n e^{x^2/2} \frac{d^n}{dx^n} e^{-x^2/2}, \qquad n=0,1,2,\ldots,
\]
form a complete orthogonal system of the Hilbert space
$L^2(\gamma)$. The first few of them are
\[
h_0(x)=1, \qquad h_1(x)=x, \qquad h_2(x)=x^2-1, \qquad h_3(x) = x^3-3x.
\]
Moreover, $((n!)^{-1/2}h_n)_{n \geq 0}$ is a complete orthonormal
basis of $L^2(\gamma)$ -- see, e.g., \cite{Mall} for details.

Simple calculations based on a change of variable immediately show
that the rescaled shifted Hermite polynomials $x \mapsto (n!)^{-1/2}
h_n(\sigma^{-1}(x-m))$ form a complete orthonormal system of the
Hilbert space $L^2(\gamma_{m,\sigma})$.

The following observations are elementary but important in the sequel.
\begin{lemma}
  \label{lm:triv}
  Let $m,\sigma \in \erre$, $\sigma>0$, and $g\colon \erre \to \erre$
  be a measurable function. One has
  \[
    \norm[\Big]{x \mapsto g(x)e^{(x-m)^2/4\sigma^2}}_{L^2(\gamma_{m,\sigma})} =
    {(\sigma^2 2\pi)}^{-1/4} \norm[\big]{g}_{L^2(\erre)}.
  \]
  In particular, the function $g$ belongs to $L^2(\erre)$ if and only
  if $x \mapsto g(x) e^{(x-m)^2/4\sigma^2}$ belongs to
  $L^2(\gamma_{m,\sigma})$.
\end{lemma}
\begin{proof}
  In fact,
  \[
  \int_\erre \abs{g(x)}^2\,dx = \int_\erre \bigl( g(x)
  e^{(x-m)^2/4\sigma^2} \bigr)^2 e^{-(x-m)^2/2\sigma^2}\,dx
  \]
  and
  \begin{align*}
  \norm[\Big]{x \mapsto g(x)e^{(x-m)^2/4\sigma^2}}^2_{L^2(\gamma_{m,\sigma})}
  &= \frac{1}{\sigma \sqrt{2\pi}} \int_\erre \bigl( 
    g(x) e^{(x-m)^2/4\sigma^2} \bigr)^2 e^{-(x-m)^2/2\sigma^2}\,dx\\
  &= {(\sigma^2 2\pi)}^{-1/2} \norm[\big]{g}^2_{L^2(\erre)}.
  \qedhere
  \end{align*}
\end{proof}

\begin{lemma}
  \label{lm:sigma}
  Let $m,\sigma \in \erre$, $\sigma>0$, and $g \in L^2(\erre)$. The
  sequence $(\alpha_n)_{n \geq 0}$ defined by
  \begin{align*}
    \alpha_n &:= \frac{1}{\sigma\sqrt{n!2\pi}} \int_{-\infty}^{+\infty}
               g(x) h_n(\sigma^{-1}(x-m)) e^{-(x-m)^2/4\sigma^2}\,dx\\
             &= \frac{1}{\sqrt{n!2\pi}} \int_{-\infty}^{+\infty}
               g(\sigma x + m) h_n(x) e^{-x^2/4}\,dx
  \end{align*}
  belongs to $\ell^2$ and is such that
  \[
  g(x) = \sum_{n=0}^\infty \alpha_n (n!)^{-1/2}
         h_n(\sigma^{-1}(x-m)) e^{-(x-m)^2/4\sigma^2}
  \]
  as an identity in $L^2(\erre)$. Moreover,
  \[
  \norm[\big]{(\alpha_n)}_{\ell^2} = {(\sigma^2 2\pi)}^{-1/4} 
  \norm[\big]{g}_{L^2(\erre)}.
  \]
\end{lemma}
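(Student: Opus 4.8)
The plan is to reduce the statement to an orthonormal expansion in the weighted Hilbert space $L^2(\gamma_{m,\sigma})$ and then to transport the conclusion back to $L^2(\erre)$ by means of Lemma~\ref{lm:triv}. To this end I set $\psi(x) := g(x) e^{(x-m)^2/4\sigma^2}$. Since $g \in L^2(\erre)$, Lemma~\ref{lm:triv} guarantees that $\psi \in L^2(\gamma_{m,\sigma})$ and records the norm identity $\norm{\psi}_{L^2(\gamma_{m,\sigma})} = (\sigma^2 2\pi)^{-1/4}\norm{g}_{L^2(\erre)}$. As a preliminary bookkeeping step I would note that the two displayed expressions for $\alpha_n$ agree: the substitution $x = \sigma y + m$ sends $\sigma^{-1}(x-m)$ to $y$ and $(x-m)^2/4\sigma^2$ to $y^2/4$, and the Jacobian $\sigma$ cancels the prefactor $\sigma^{-1}$, turning the first integral into the second. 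This is routine.

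The key observation is that $\alpha_n$ is exactly the Fourier coefficient of $\psi$ with respect to the complete orthonormal system $e_n(x) := (n!)^{-1/2} h_n(\sigma^{-1}(x-m))$ of $L^2(\gamma_{m,\sigma})$ (orthonormality and completeness of this system were recorded just before Lemma~\ref{lm:triv}). Indeed, since $\gamma_{m,\sigma}$ has density $(\sigma\sqrt{2\pi})^{-1} e^{-(x-m)^2/2\sigma^2}$, writing out $\ip{\psi}{e_n}_{L^2(\gamma_{m,\sigma})}$ and inserting $\psi = g\,e^{(x-m)^2/4\sigma^2}$ makes the two Gaussian exponentials combine into $e^{-(x-m)^2/4\sigma^2}$; after collecting the constants one reads off precisely the first formula defining $\alpha_n$.

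With this identification in hand, the Hilbert-space machinery finishes the algebraic content at once. Parseval's identity in $L^2(\gamma_{m,\sigma})$ gives both $(\alpha_n) \in \ell^2$ and $\norm{(\alpha_n)}_{\ell^2} = \norm{\psi}_{L^2(\gamma_{m,\sigma})}$, which, combined with the norm supplied by Lemma~\ref{lm:triv}, yields the asserted identity $\norm{(\alpha_n)}_{\ell^2} = (\sigma^2 2\pi)^{-1/4}\norm{g}_{L^2(\erre)}$. Completeness of $(e_n)$ gives moreover the expansion $\psi = \sum_{n\geq 0} \alpha_n e_n$, convergent in $L^2(\gamma_{m,\sigma})$.

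The one step that deserves care---the main, if mild, obstacle---is passing from this expansion of $\psi$ to the claimed series representation of $g$ in $L^2(\erre)$. The clean device is to read Lemma~\ref{lm:triv} in reverse: multiplication by $e^{-(x-m)^2/4\sigma^2}$ is, up to a positive multiplicative constant, an isometry from $L^2(\gamma_{m,\sigma})$ onto $L^2(\erre)$, sending $\psi$ to $g$ and each $e_n$ to $(n!)^{-1/2} h_n(\sigma^{-1}(x-m)) e^{-(x-m)^2/4\sigma^2}$. Being bounded and linear, this map commutes with the limit: applying it to the partial sums $\sum_{n=0}^N \alpha_n e_n$ and using their $L^2(\gamma_{m,\sigma})$-convergence to $\psi$, one obtains $L^2(\erre)$-convergence of $\sum_{n=0}^N \alpha_n (n!)^{-1/2} h_n(\sigma^{-1}(x-m)) e^{-(x-m)^2/4\sigma^2}$ to $g$, which is exactly the stated identity. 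Beyond tracking this isometry and the constants it carries, I anticipate no genuine difficulty.
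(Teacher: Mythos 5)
Your proof is correct and follows the same route as the paper: apply Lemma~\ref{lm:triv} to place $x \mapsto g(x)e^{(x-m)^2/4\sigma^2}$ in $L^2(\gamma_{m,\sigma})$, expand it there via Parseval in the orthonormal system $(n!)^{-1/2}h_n(\sigma^{-1}(x-m))$, identify the coefficients with $\alpha_n$, and transport the expansion and the norm identity back to $L^2(\erre)$ by the scaled isometry of Lemma~\ref{lm:triv}. You merely make explicit a step the paper leaves implicit (that the isometry carries the $L^2(\gamma_{m,\sigma})$-convergent partial sums to $L^2(\erre)$-convergent ones), which is a welcome clarification rather than a deviation.
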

\begin{proof}
  It follows by Lemma~\ref{lm:triv} that $x \mapsto g(x)
  e^{(x-m)^2/4\sigma^2} \in L^2(\gamma_{m,\sigma})$, hence, 
  by Parseval's identity,
  \[
    g(x) e^{(x-m)^2/4\sigma^2} = \sum_{n=0}^\infty \alpha_n (n!)^{-1/2}
    h_n(\sigma^{-1}(x-m))
  \]
  in $L^2(\gamma_{m,\sigma})$, where, with a slight but harmless abuse
  of notation,
  \begin{align*}
    \alpha_n
    &:= \ip[\Big]{g(x)e^{(x-m)^2/4\sigma^2}}%
                 {(n!)^{-1/2} h_n(\sigma^{-1}(x-m))}_{L^2(\gamma_{m,\sigma})}\\
    &= \frac{1}{\sigma\sqrt{n!2\pi}} \int_{-\infty}^{+\infty}
               g(x) h_n(\sigma^{-1}(x-m)) e^{-(x-m)^2/4\sigma^2}\,dx\\
    &= \frac{1}{\sqrt{n!2\pi}} \int_{-\infty}^{+\infty}
               g(\sigma x+m) h_n(x) e^{-x^2/4}\,dx.
  \end{align*}
  Lemma~\ref{lm:triv} then implies
  \[
    g(x) = \sum_{n=0}^\infty \alpha_n (n!)^{-1/2}
           h_n(\sigma^{-1}(x-m)) e^{-(x-m)^2/4\sigma^2}
  \]
  in $L^2(\erre)$ and
  \[
  \norm[\big]{(\alpha_n)}_{\ell^2} = \norm[\Big]{x \mapsto
    g(x)e^{(x-m)^2/4\sigma^2}}_{L^2(\gamma_{m,\sigma})} 
    = {(\sigma^2 2\pi)}^{-1/4} \norm[\big]{g}_{L^2(\erre)}.
    \qedhere
  \]
\end{proof}

\subsection{Integrals with respect to Gaussian measures}
\label{ssec:gpi}
We shall need some explicit Gaussian indefinite integrals. In
particular, for any $n \geq 1$, integration by parts gives the
identities
\begin{align*}
\int x^n e^{-x^2/2} \,dx&=\int x^{n-1} x e^{-x^2/2} \,dx \\
&=-x^{n-1} e^{-x^2/2} + (n-1)\int x^{n-2} e^{-x^2/2}\,dx + c
\end{align*}
(here and in the following $c \in \erre$ denotes a constant), from
which it follows that
\begin{align}
  \label{eq:gi2}
  \int x^2 e^{-x^2/2}\,dx &= -xe^{-x^2/2} + \int e^{-x^2/2}\,dx + c,\\
  \label{eq:gi3}
  \int x^3 e^{-x^2/2}\,dx &= -x^2e^{-x^2/2} + 2\int xe^{-x^2/2}\,dx + c,
\end{align}
and, by iteration: for $n \geq 4$ even
\begin{equation}
  \label{eq:gine}
  \begin{split}
  \int x^n e^{-x^2/2} \,dx
  &= -e^{-x^2/2}\bigl(x^{n-1}+(n-1)x^{n-3} + \cdots
    + (n-1) (n-3)\cdots 3 x\bigr)\\
  &\quad + (n-1) (n-3)\cdots 3 \int e^{-x^2/2} \,dx +c,
  \end{split}
\end{equation}
as well as, for $n \geq 5$ odd,
\begin{equation}
  \label{eq:gino}
  \begin{split}
  \int x^n e^{-x^2/2} \,dx
  &= -e^{-x^2/2}\bigl(x^{n-1}+(n-1)x^{n-3}+ \cdots
    + (n-1) (n-3)\cdots 4 x^2\bigr)\\
  &\quad + (n-1) (n-3)\cdots 2 \int x e^{-x^2/2} \,dx + c
  \end{split}
\end{equation}
(if $n=5$ the product $(n-1)(n-3)\cdots 4$ must be interpreted as just
equal to $4$).

Alternative expressions can be written in
terms of the incomplete Gamma function, defined as
\[
\Gamma(s,x) := \int_x^{+\infty} y^{s-1} e^{-y}\,dy
\]
for $s \in \mathbb{C}$, $\operatorname{Re} s > 1$ and $x \geq 0$ (see,
e.g., \cite{Jam:gamma}). We have to distinguish two cases: (a) if $n$
is even and $a<0$, then
\begin{align*}
\int_a^{+\infty} x^n e^{-x^2/2}\,dx 
&= \int_a^0 x^n e^{-x^2/2}\,dx + \int_0^{+\infty} x^n e^{-x^2/2}\,dx\\
&= \int_0^{-a} x^n e^{-x^2/2}\,dx + \int_0^{+\infty} x^n e^{-x^2/2}\,dx\\
&= 2\int_0^{+\infty} x^n e^{-x^2/2}\,dx - \int_{-a}^{+\infty} x^n e^{-x^2/2}\,dx\\
&= 2^{\frac{n+1}{2}} \Gamma\Bigl( \frac{n+1}{2} \Bigr)
- 2^{\frac{n-1}{2}} \Gamma\Bigl( \frac{n+1}{2},\frac{a^2}{2} \Bigr),
\end{align*}
and (b) in all other cases,
\[
  \int_a^{+\infty} x^n e^{-x^2/2}\,dx = 2^{\frac{n-1}{2}} 
  \Gamma\Bigl( \frac{n+1}{2},\frac{a^2}{2} \Bigr).
\]

We shall often use the following simple identity: for any $a \in
\erre$, $x_0, x_1 \in [-\infty,+\infty]$, and measurable $g$ such that
$x \mapsto g(x) e^{x^2/2+ax} \in L^1(x_0,x_1)$, one has
\begin{equation}
  \label{eq:inta}
  \int_{x_0}^{x_1} g(x) e^{-x^2/2+ax}\,dx =
  e^{a^2/2} \int_{x_0-a}^{x_1-a} g(x+a) e^{-x^2/2}\,dx,
\end{equation}
which follows by
$\displaystyle -\frac{x^2}{2} + ax = -\frac12(x-a)^2 + \frac12 a^2$
and a change of variable.

We conclude computing the integrals of rescaled Hermite polynomials
with respect to a standard Gaussian measure.
\begin{prop}
  \label{prop:cn}
  Let $n \in \enne$. One has
  \begin{equation}
    \label{eq:cn}
    \int_\erre h_n(\sqrt{2}x) e^{-x^2/2}\,dx
    = 2^{n/2+1/2} \Gamma(n/2+1/2).
  \end{equation}
\end{prop}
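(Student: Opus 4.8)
The plan is to evaluate the integral by means of the exponential generating function of the Hermite polynomials, which collapses the whole family of integrals into a single Gaussian integral that can be handled by \eqref{eq:inta}. First I would record the generating-function identity $\sum_{n \geq 0} \frac{t^n}{n!} h_n(y) = e^{ty - t^2/2}$, which follows directly from the Rodrigues-type definition of $h_n$: Taylor-expanding $e^{-(y-t)^2/2}$ in $t$ about $0$ and using
\[
  \partial_t^n e^{-(y-t)^2/2}\big|_{t=0} = (-1)^n \partial_y^n e^{-y^2/2} = h_n(y) e^{-y^2/2}
\]
gives $e^{-(y-t)^2/2} = \sum_n \frac{t^n}{n!} h_n(y) e^{-y^2/2}$, whence the claim after multiplying by $e^{y^2/2}$ and simplifying the exponent. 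Substituting $y = \sqrt{2}\,x$ then yields $\sum_n \frac{t^n}{n!} h_n(\sqrt{2}\,x) = e^{\sqrt{2}\,t x - t^2/2}$.

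Next I would multiply the integrand $h_n(\sqrt{2}\,x) e^{-x^2/2}$ by $t^n/n!$, sum over $n$, and interchange summation and integration, so that
\[
  \sum_{n \geq 0} \frac{t^n}{n!} \int_\erre h_n(\sqrt{2}\,x) e^{-x^2/2}\,dx
  = \int_\erre e^{\sqrt{2}\,t x - t^2/2} e^{-x^2/2}\,dx
  = e^{-t^2/2} \int_\erre e^{-x^2/2 + \sqrt{2}\,t x}\,dx .
\]
Applying \eqref{eq:inta} with $a = \sqrt{2}\,t$ (equivalently, completing the square) evaluates the right-hand side to $e^{-t^2/2}\, e^{t^2} \sqrt{2\pi} = \sqrt{2\pi}\, e^{t^2/2}$. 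Expanding $\sqrt{2\pi}\, e^{t^2/2} = \sqrt{2\pi} \sum_{k \geq 0} t^{2k}/(2^k k!)$ and matching the coefficient of $t^n/n!$, I read off that for $n = 2k$ the integral equals $\sqrt{2\pi}\,(2k)!/(2^k k!)$, which coincides with $2^{n/2+1/2}\Gamma(n/2+1/2)$ by Legendre's duplication formula (equivalently $\Gamma(k+\tfrac12) = (2k)!\sqrt{\pi}/(4^k k!)$), giving \eqref{eq:cn}; the odd-indexed coefficients on the two sides must be compared directly, the left-hand integral vanishing by parity.

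The one step requiring care, and the main obstacle, is the interchange of sum and integral. I would justify it by Fubini's theorem: a Cauchy--Schwarz estimate combined with the normalization $\norm{h_n}_{L^2(\gamma)}^2 = n!$ shows that $\int_\erre \abs{h_n(\sqrt{2}\,x)} e^{-x^2/2}\,dx$ grows no faster than $C\, 2^{n/2} \sqrt{n!}$, so that $\sum_n \frac{\abs{t}^n}{n!} \int_\erre \abs{h_n(\sqrt{2}\,x)} e^{-x^2/2}\,dx \leq C \sum_n (\abs{t}\sqrt{2})^n/\sqrt{n!} < \infty$ for every $t$, and this absolute convergence legitimizes the exchange.

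A second, series-free route is worth keeping in reserve, as it sidesteps the interchange entirely: running the same generating-function computation with the monomials in place of the $h_n$ shows $\int_\erre h_n(\sqrt{2}\,x) e^{-x^2/2}\,dx = \int_\erre x^n e^{-x^2/2}\,dx$, and this Gaussian moment is then evaluated directly from the case-(a) identity of Section~\ref{ssec:gpi} by letting $a \to -\infty$ (so that the incomplete Gamma term vanishes), which returns $2^{n/2+1/2}\Gamma(n/2+1/2)$ for $n$ even. For this variant the only interchange needed is on the monomial side, where the dominating function $e^{\abs{t}\abs{x}} e^{-x^2/2}$ is manifestly integrable, so Tonelli applies with no further work.
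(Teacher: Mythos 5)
Your proof follows essentially the same route as the paper's: both multiply the generating-function identity $\sum_{n\geq 0}\frac{t^n}{n!}h_n(\sqrt{2}x)=e^{\sqrt{2}tx-t^2/2}$ by $e^{-x^2/2}$, integrate term by term, complete the square to obtain $\sqrt{2\pi}\,e^{t^2/2}$, and extract coefficients, your direct Taylor expansion of $e^{t^2/2}$ being a cleaner version of the paper's detour through $\frac{d^n}{d\lambda^n}e^{\lambda^2/2}\big\vert_{\lambda=0}=i^{-n}h_n(0)$ and the explicit coefficient formula, with Legendre duplication invoked at the end in both cases. Two marginal remarks: a Cauchy--Schwarz estimate cannot deliver exactly $\norm{h_n}_{L^2(\gamma)}^2=n!$ here, since the weight $e^{-x^2/2}$ applied to $h_n(\sqrt{2}\,\cdot)$ forces a wider Gaussian (one gets $\int_\erre h_n(\sqrt{2}x)^2e^{-x^2/2}\,dx\lesssim n!\,3^n$ rather than $n!\,2^n$), but any geometric constant makes your dominating series converge, so the interchange is fine; and your parity observation is correct and worth keeping --- for odd $n$ the integral vanishes while the right-hand side of \eqref{eq:cn} does not, so the stated identity is really only valid for even $n$, consistent with the paper's own proof and its later remark that $c_n=0$ for odd $n$.
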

\begin{proof}
  For any $\lambda, x \in \mathbb{C}$ the generating function identity
  \[
    \exp\bigl( \lambda x - \lambda^2/2 \bigr)
    = \sum_{n=0}^\infty \frac{\lambda^n}{n!} h_n(x)
  \]
  holds, with uniform convergence of the series on compact sets (see,
  e.g., \cite[p.~7]{Mall}. Then
  \[
    e^{-\lambda^2/2} \int_\erre e^{\sqrt{2}\lambda x - x^2/2}\,dx =
    \sum_{n=0}^\infty \frac{\lambda^n}{n!} \int_\erre
    h_n(\sqrt{2}x)e^{-x^2/2} \,dx,
  \]
  where the exchange of integration and summation can be justified by
  approximation and passage to the limit. Writing
  \[
    \sqrt{2}\lambda x - \frac{x^2}{2} = -\frac12 (x-\sqrt{2}\lambda)^2
    + \lambda^2
  \]
  yields
  \[
    e^{-\lambda^2/2} \int_\erre e^{\sqrt{2}\lambda x - x^2/2} \,dx =
    e^{\lambda^2/2} \int_\erre e^{-\frac12 (x-\sqrt{2}\lambda)^2} \,dx
    = \sqrt{2\pi} \, e^{\lambda^2/2}.
  \]
  Setting
  \[
    m_n := \frac{1}{\sqrt{2\pi}} \int_\erre h_n(\sqrt{2}x)e^{-x^2/2} \,dx,
  \]
  so that
  \[
     e^{\lambda^2/2} = \sum_{n=0}^\infty \frac{\lambda^n}{n!} m_n,
  \]
  immediately implies
  \begin{equation}
    \label{eq:*}
    m_n = \left.\frac{d^n}{d\lambda^n} e^{\lambda^2/2}
    \right\vert_{\lambda=0}.
  \end{equation}
  The identity
  \begin{equation}
    \label{eq:**}
    h_n(x) = (-1)^n e^{x^2/2} \frac{d^n}{dx^n} e^{-x^2/2},
  \end{equation}
  implies, by linearity of complex differentiation,
  \[
    h_n(ix) = \frac{(-1)^n}{i^n} e^{-x^2/2} \frac{d^n}{dx^n} e^{x^2/2}
    = i^n e^{-x^2/2} \frac{d^n}{dx^n} e^{x^2/2},
  \]
  thus also
  \[
    \left. \frac{d^n}{dx^n} e^{x^2/2} \right\vert_{x=0} =
    \frac{1}{i^n} h_n(0).
  \]
  In particular, we immediately have that $m_n=0$ for every $n$ odd,
  as odd Hermite polynomials do not have terms of order zero.  We are
  going to use the following expression for the coefficients of
  Hermite polynomials (see, e.g., \cite[Eq.~18.5.13]{DLMF}):
  \[
    h_n(x) = n! \sum_{m=0}^{\lfloor n/2 \rfloor}
    \frac{(-1)^m x^{n-2m}}{m!(n-2m)! 2^m}.
  \]
  For any $n \in 2\mathbb{N}$ (the only case that matters for our
  purposes), the term of order zero has coefficient
  \[
    h_n(0) = \frac{(-1)^{n/2} n!}{2^{n/2} (n/2)!}
    = \frac{i^n n!}{2^{n/2} (n/2)!},
  \]
  hence, recalling that $\Gamma(k+1) = k!$ for any integer $k$,
  \[
    \left. \frac{d^n}{dx^n} e^{x^2/2} \right\vert_{x=0} =
    \frac{1}{i^n} h_n(0) = \frac{n!}{2^{n/2} (n/2)!} \,
    \ind{2\enne}(n) = \frac{\Gamma(n+1)}{2^{n/2} \Gamma(n/2+1)} \,
    \frac{1+(-1)^n}{2}.
  \]
  It follows by the Legendre duplication formula
  \[
  \Gamma(z) \Gamma(z+1/2) = 2^{1-2z} \, \sqrt{\pi} \, \Gamma(2z),
  \]
  taking $z=n/2+1/2$, that
  \[
  \frac{\Gamma(n+1)}{\Gamma(n/2+1)} = \frac{2^n}{\sqrt{\pi}} \Gamma(n/2+1/2),
  \]
  thus also
  \[
    \left. \frac{d^n}{dx^n} e^{x^2/2} \right\vert_{x=0} =
    \frac{2^{n/2}}{\sqrt{\pi}} \Gamma(n/2+1/2).
  \]
\end{proof}
\begin{rmk}
  As it immediately follows from equations \eqref{eq:*} and \eqref{eq:**}, the
  value on the right-hand side of equation \eqref{eq:cn} is just the absolute
  value of the term of order zero in the Hermite polynomial of order
  $n$.
\end{rmk}

\subsection{Pricing functionals}
\label{ssec:equiv}
Let $t>0$ be a fixed time and $S_t$ the discounted price at time $t$
of an asset, which is supposed to be strictly positive and, for
simplicity, with zero dividend process. The price at time zero of a
put option with strike price $\widehat{k} \geq 0$ can be written,
setting $k:=\beta_t^{-1} \widehat{k}$ and denoting the distribution
function of $\log S_t/S_0$ by $F$, as
\[
\pi(k) = \E(k-S_t)^+ = \int_\erre {(k-S_0e^x)}^+\,dF(x).
\]
Since $(k-S_0e^x)^+ = S_0{(k/S_0 - e^x)}^+$ for every $k \geq 0$ and
$x \in \erre$, we can and shall assume $S_0=1$ without loss of
generality. As is well known, the pricing functional (at time zero) of
a call option with strike $k$ on the same asset, defined by
\[
\pi_c(k) := \int_\erre {(e^x-k)}^+\,dF(x),
\]
is related to $\pi$ by the put-call parity relation
$1 - k = \pi_c - \pi$, which follows immediately by the identity
$S_t-k = (S_t-k)^+-(k-S_t)^+$.

We shall use the following properties of the function $\pi$, the short
proof of which is included for the reader's convenience. A more
detailed treatment can be found in \cite{cm:repr}.
\begin{prop}
  The functions $\pi$ is increasing, positive, $1$-Lipschitz
  continuous, and convex. Moreover, $\pi(k) \leq k$ for every
  $k \geq 0$.
\end{prop}
\begin{proof}
  Positivity and boundedness are trivial by definition, while
  monotonicity follows by $(k_1-e^x)^+ \geq (k_2-e^x)^+$ and
  $(e^x-k_1) \leq (e^x-k_2)^+$ for every $x \in \erre$ whenever
  $k_1 \geq k_2 \geq 0$. Note that $k \mapsto k-e^x$ is $1$-Lipschitz
  continuous for every $x \in \erre$. Since $y \mapsto y^+$ is
  $1$-Lipschitz continuous, so is $k \mapsto (k-e^x)^+$ by
  composition, uniformly with respect to $x \in \erre$. The property
  is then preserved integrating with respect to a measure the total
  mass of which is one. The proof of convexity is similar:
  $k \mapsto k-e^x$ is affine, in particular convex, and
  $y \mapsto y^+$ is convex increasing, hence $k \mapsto (k-e^x)^+$ is
  convex. Finally, integration with respect to a positive measure
  preserves convexity.
\end{proof}

\begin{rmk}
  All properties in the statement of the previous proposition hold
  also for call options, except for the boundedness. In fact, the
  integrand $x \mapsto (e^x-k)^+$ in the definition of $\pi_c$ is
  unbounded, which implies that $k \mapsto \pi_c(k)$ is itself
  unbounded. The boundedness of the integrand in the definition of
  $\pi$ plays a key role in the discussion to follow, and is the main
  reason for us to consider put options rather than call options.
\end{rmk}

It is clear that the distribution of logarithmic returns determines
the pricing functional for put options $\pi$. The following
proposition says that the correspondence is in fact bijective,
i.e. prices of put options for all maturities determine the
distribution of logarithmic returns. This can be seen as a non-smooth
extension of a classical result going back (at least) to Breeden and
Litzenberger \cite{BreeLitz}.
\begin{prop}
  \label{prop:Fpi}
  The map $F \mapsto \pi$ is bijective.
\end{prop}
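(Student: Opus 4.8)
\section*{Proof proposal}

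The plan is to establish injectivity and surjectivity separately, the heart of the matter being the reconstruction of $F$ from $\pi$ (a non-smooth Breeden--Litzenberger inversion). The key observation is that, since $S_t>0$ almost surely, for fixed $x$ the convex function $k \mapsto (k-e^x)^+$ has right derivative $\ind{\{e^x \le k\}}$, which is bounded by $1$; dominated convergence then justifies differentiating under the integral sign, giving
\[
  \pi'_+(k) = \int_\erre \ind{\{e^x \le k\}}\,dF(x) = \P(S_t \le k)
\]
for every $k>0$, where $\pi'_+$ denotes the right derivative (which exists everywhere because $\pi$ is convex). Both sides are nondecreasing and right-continuous, so writing $k=e^y$ yields $F(y)=\pi'_+(e^y)$ for all $y\in\erre$. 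Hence $\pi$ determines $F$, which gives injectivity.

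For surjectivity I would run this construction backwards. Given a function $\pi$ in the target class --- convex, increasing, $1$-Lipschitz, with $\pi(0)=0$, $\pi'_+(0+)=0$ and $\pi'_+(+\infty)=1$ --- I define $F(y):=\pi'_+(e^y)$. Convexity makes $\pi'_+$ nondecreasing and right-continuous, the $1$-Lipschitz and monotonicity properties force $F$ to take values in $[0,1]$, and the two limiting conditions give $F(-\infty)=0$ and $F(+\infty)=1$; thus $F$ is a genuine distribution function. It then remains to check that this $F$ reproduces $\pi$, which follows from the elementary layer-cake identity
\[
  \int_\erre (k-e^x)^+\,dF(x) = \E(k-S_t)^+ = \int_0^k \P(S_t \le u)\,du = \int_0^k \pi'_+(u)\,du = \pi(k),
\]
the last equality using $\pi(0)=0$ together with the absolute continuity of the convex function $\pi$.

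The main difficulty is entirely one of low regularity: here $F$ is a general distribution function, so $\pi$ need not be twice differentiable, and I must work with one-sided derivatives and check that the (at most countable) set of non-differentiability points of $\pi$ --- equivalently, the atoms of the law of $S_t$ --- does not disturb the integral identities above. A cleaner, coordinate-free route that avoids committing to left versus right derivatives is to identify the distributional second derivative of $\pi$ directly: for $\varphi \in C_c^\infty(0,\infty)$, integrating by parts twice and applying Fubini gives $\int_0^\infty \pi(k)\varphi''(k)\,dk = \int_\erre \varphi(e^x)\,dF(x)$, so that $\pi''$ coincides, as a measure on $(0,\infty)$, with the law of $S_t$; this delivers injectivity at once and pins down the correct target measure for surjectivity. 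A last, purely expository point is to state precisely the class of functions onto which the map is surjective, namely the normalized class singled out above, so that the assertion of bijectivity is unambiguous.
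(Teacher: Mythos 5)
Your proof is correct, but it inverts the map by a different mechanism than the paper. For injectivity the paper takes two distribution functions $F_1,F_2$ with the same prices, sets $G=F_1-F_2$, and integrates by parts to get $\int_{-\infty}^{\log k}e^xG(x)\,dx=0$ for all $k>0$, concluding $G=0$ a.e.\ and hence everywhere by right-continuity. You instead exhibit an explicit left inverse, $F(y)=\pi'_+(e^y)$, by differentiating under the integral sign (the difference quotients of $k\mapsto(k-e^x)^+$ are bounded by $1$, so dominated convergence applies); this is exactly the reconstruction the paper records \emph{after} the proof, where it notes that the c\`adl\`ag version of $\pi'$ equals $F(\log k)$. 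Your alternative via the distributional second derivative, identifying $\pi''$ with the law of $S_t$, is also sound and is the cleanest way to sidestep the left/right derivative issue at atoms. The two approaches buy slightly different things: the paper's subtraction argument is shorter and needs only that a locally integrable function inducing the zero measure vanishes a.e.; yours is constructive and doubles as the inversion formula. On surjectivity you and the paper are answering slightly different questions: the paper declares surjectivity to hold by definition (the codomain is the image of the map), whereas you characterize the image intrinsically (convex, increasing, $1$-Lipschitz, $\pi(0)=0$, $\pi'_+(0+)=0$, $\pi'_+(+\infty)=1$) and verify via the layer-cake identity that every such $\pi$ is attained. That extra work is correct and more informative, but it is not required for the statement as the paper intends it; if you keep it, you should say explicitly that you are proving bijectivity onto that normalized class, as you yourself note at the end.
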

\begin{proof}
  The map is surjective by definition. To prove injectivity, let $F_1$
  and $F_2$ be distribution functions and assume that
  \[
  \int_\erre (k-e^x)^+\,dF_1(x) = \int_\erre (k-e^x)^+\,dF_2(x),
  \]
  hence, setting $G := F_1 - F_2$ and integrating by parts,\footnote{%
  Since $F$ is not necessarily
  continuous, but just c\`adl\`ag (right-continuous with left limits),
  one has, for any c\`adl\`ag function $G$,
  \[
    F(b)G(b) - F(a)G(a) = \int_{\mathopen]a,b\mathclose]} G(x-)\,dF(x)
    + \int_{\mathopen]a,b\mathclose]} F(x)\,dG(x),
  \]
  where, if $G$ is continuous, one can obviously replace $G(x-)$ by
  $G(x)$.}
  \begin{align*}
  0 &= \int_\erre (k-e^x)^+\,dG(x) 
  = \int_{-\infty}^{\log k} (k-e^x)\,dG(x)\\
  &= \Bigl.(k-e^x) G(x)\Bigr\vert_{-\infty}^{\log k}
  + \int_{-\infty}^{\log k} e^x G(x)\,dx\\
  &= \int_{-\infty}^{\log k} e^x G(x)\,dx.
  \end{align*}
  Since this identity holds for every $k>0$, the (signed) measure with
  density $x \mapsto e^xG(x)$ with respect to the Lebesgue measure is
  equal to the zero measure, hence $G$ is equal to zero almost
  everywhere. Since $G$ is c\`adl\`ag, it follows that $G=0$
  everywhere, i.e. $F_1=F_2$.
\end{proof}

One can explicitly construct the inverse of the map $F \mapsto \pi$ as
follows: integrating by parts as in the previous proof yields
\[
  \pi(k) = \int_{-\infty}^{\log k} (k-e^x)\,dF(x) =
  \int_{-\infty}^{\log k} e^x F(x)\,dx,
\]
which implies that the c\`adl\`ag version of the derivative of $\pi$
coincides with $F(\log k)$. In particular, if $F$ is continuous, then
$\pi$ is of class $C^1$ with $\pi'(k) = F(\log k)$ for all $k>0$.

\medskip

Let us also recall that, for any fixed time to maturity, there is a
one-to-one correspondence between put option prices and the implied
volatility. Let $v_t\colon \erre_+ \to \erre_+$ be the (unique)
function satisfying $\mathsf{BS}(1,t,k,v_t(k))=\pi(k)$, where
$\mathsf{BS}(s,t,k,\sigma)$ denotes the Black-Scholes price of a put
option on an underlying with price $s$ at time zero, time to maturity
$t$, strike price $k$, interest rate equal to zero, and volatility
$\sigma$. Then we immediately have the following claim. Since here we
are concerned only with the case where the time to maturity $t$ is
fixed, we shall denote the volatility function just by $v$.
\begin{prop}
  There is a bijection between the implied volatility function $v$ and
  the distribution function $F$ of the logarithmic return.
\end{prop}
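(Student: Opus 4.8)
The plan is to factor the correspondence $F \mapsto v$ as a composition of two bijections, $F \mapsto \pi$ followed by $\pi \mapsto v$, and to borrow the first factor from Proposition~\ref{prop:Fpi}. Since that proposition already shows $F \mapsto \pi$ to be bijective, it suffices to prove that the map $\pi \mapsto v$, defined pointwise by the implicit relation $\mathsf{BS}(1,t,k,v(k))=\pi(k)$, is itself a bijection onto the relevant class of volatility functions. Composing the two then yields the claim, with the inverse of $F \mapsto v$ obtained by inverting $\pi \mapsto v$ pointwise and then applying the inverse of $F \mapsto \pi$.

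To handle the second factor I would fix the time to maturity $t>0$ and a strike $k>0$ and study the real function $\sigma \mapsto \mathsf{BS}(1,t,k,\sigma)$. The key step is that this function is continuous and strictly increasing on $(0,+\infty)$. Strict monotonicity follows from the positivity of the vega: differentiating the Black--Scholes put price with respect to $\sigma$ (equivalently the call price, since the two differ only by the $\sigma$-independent parity term $1-k$) gives $\sqrt{t}\,\phi(d_1)>0$, where $d_1$ is the usual Black--Scholes argument and $\phi$ the standard Gaussian density introduced above. Evaluating the limits as $\sigma \to 0^+$ and $\sigma \to +\infty$ shows that the function maps $(0,+\infty)$ bijectively onto the open interval $((k-1)^+, k)$ of arbitrage-free put prices. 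Hence, for each $k$, the equation $\mathsf{BS}(1,t,k,\sigma)=\pi(k)$ has a unique solution $\sigma=v(k)$ exactly when $\pi(k)$ lies in that interval.

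It then remains to verify that $\pi(k)$ does lie strictly inside $((k-1)^+, k)$ for every $k$, and this is where the properties of $\pi$ recorded earlier enter: positivity and the bound $\pi(k)\le k$ control the right endpoint, while the put--call parity $1-k=\pi_c-\pi$ together with $\pi_c\ge 0$ gives $\pi(k)\ge (k-1)^+$; under nondegeneracy of the return law these inequalities are strict. Thus $v(k)$ is well defined, finite and positive, the pointwise inversion furnishes a two-sided inverse of $\pi \mapsto v$, and composing with the bijection of Proposition~\ref{prop:Fpi} delivers the asserted bijection between $F$ and $v$.

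The main obstacle is not the monotonicity computation, which is classical, but the bookkeeping of domains and codomains needed to speak of a genuine bijection: one must pin down the exact class of admissible volatility functions $v$, namely the images of distribution functions under the composition, and confirm that $\pi(k)$ stays strictly within the no-arbitrage bounds so that $v(k)\in\erre_+$. Since the statement already posits existence and uniqueness of $v$ given $\pi$, this ultimately reduces to the strict version of the inequalities above, which hinges only on the nondegeneracy of $F$.
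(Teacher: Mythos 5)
Your proof is correct and takes essentially the same route as the paper: the result there is stated as an immediate consequence of composing the bijection $F \mapsto \pi$ of Proposition~\ref{prop:Fpi} with the (recalled, classical) one-to-one correspondence between put prices and implied volatility for fixed maturity. You simply spell out the second factor (positive vega, limiting values, and the no-arbitrage bounds on $\pi(k)$) that the paper leaves implicit, including the correct caveat that strictness of those bounds is what makes $v(k)$ well defined.
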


Let us assume that $F$ admits a density $f \in L^2$. As mentioned in
the introduction, we are going to construct a sequence of functions
$(f_n)$ converging to $f$ in $L^2$, hence it is natural to ask whether
the sequence of approximations $(\pi^n(k))$ defined by
\[
  \pi^n(k) := \int_\erre (k-e^x)^+\,f_n(x)\,dx, \qquad n \geq 0,
  \quad k>0,
\]
converges to $\pi(k)$ as $n \to \infty$. This is in general not the
case, because the function $x \mapsto (k-e^x)^+$ belongs to $L^\infty$
but not to $L^2$, hence it induces a continuous linear form on $L^1$,
but not on $L^2$.

One can show, however, that put option prices for all $k$ can be
reconstructed from approximation to option prices with payoff of the
type
\[
  \theta_{k_1,k_2}(x) = (k_2-e^x)^+ - \frac{k_2}{k_1}(k_1-e^x)^+,
  \qquad k_1, k_2 > 0.
\]
More precisely, to identify the pricing functional $\pi$, it suffices
to know, for any sequence $(f_n)$ converging to $f$ in $L^2(\erre)$,
the values $\ip{\theta_{k_1,k_2}}{f_n}$ for all $k_1,k_2>0$ and all
$n \geq 0$, where we recall that $\ip{\cdot}{\cdot}$ stands for the
scalar product of $L^2$.
In fact, since $\theta_{k_1,k_2} \in L^2$, for any sequence $(f_n)$
converging to $f$ in $L^2$ one has
\[
  \pi^n(k_2) - \frac{k_2}{k_1} \pi^n(k_1) =
  \ip[\big]{\theta_{k_1,k_2}}{f_n} \longrightarrow
  \ip[\big]{\theta_{k_1,k_2}}{f} = \pi(k_2) - \frac{k_2}{k_1} \pi(k_1).
\]
Moreover, the function $x \mapsto \frac{k_2}{k_1} (k_1 - e^x)^+$
converges to zero as $k_1 \to 0$ in $L^p$ for every
$p \in [1,\infty\mathclose[$, hence
\begin{equation}
  \label{eq:coa}
  \lim_{k_1 \to 0} \frac{k_2}{k_1} \pi(k_1) =
  \lim_{k_1 \to 0} \int_\erre \frac{k_2}{k_1} (k_1-e^x)^+ f(x)\,dx = 0,
\end{equation}
i.e.
\[
  \lim_{k_1 \to 0} \lim_{n \to \infty} \ip[\big]{\theta_{k_1,k_2}}{f_n}
  = \pi(k_2) \qquad \forall k_2 > 0
\]
(see \cite{cm:repr} for more detail). Taking into account
Proposition~\ref{prop:Fpi}, the proof of the following claim is then
immediate.
\begin{prop}
  If there exists a sequence $(f_n) \subset L^2$ converging to $f$ in
  $L^2$, then there is a bijection between
  \[
    \bigl\{ \ip{\theta_{k_1,k_2}}{f_n} : \,
    k_1, k_2 \in \mathopen]0,\infty\mathclose[, \, n \in \mathbb{N}
    \bigr\}
  \]
  and the distribution $F$ of logarithmic returns.
\end{prop}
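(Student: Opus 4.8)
The plan is to realize the asserted bijection as a composition: the indexed family $\{\ip{\theta_{k_1,k_2}}{f_n}\}$ will be shown to determine the put pricing functional $\pi$, and $\pi$ in turn determines the distribution function $F$ by Proposition~\ref{prop:Fpi}. Since that proposition already provides a bijection $F \leftrightarrow \pi$, the only thing left to verify is that the family of scalar products and $\pi$ carry the same information, which is precisely what the identities established just before the statement yield.

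First I would fix $k_1, k_2 > 0$ and pass to the limit in $n$. Because $\theta_{k_1,k_2} \in L^2$ and $f_n \to f$ in $L^2$, continuity of the $L^2$ pairing gives $\ip{\theta_{k_1,k_2}}{f_n} \to \ip{\theta_{k_1,k_2}}{f} = \pi(k_2) - \frac{k_2}{k_1}\pi(k_1)$, as computed above. I would then send $k_1 \to 0$ and invoke \eqref{eq:coa}, by which $\frac{k_2}{k_1}\pi(k_1) \to 0$, to conclude
\[
\lim_{k_1 \to 0} \lim_{n \to \infty} \ip{\theta_{k_1,k_2}}{f_n} = \pi(k_2) \qquad \forall k_2 > 0.
\]
Thus the whole of $\pi$ is recovered from the family. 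Composing this reconstruction with the bijection of Proposition~\ref{prop:Fpi} then yields the claim: the family determines $\pi$ and hence $F$, while conversely any $F$ with density $f \in L^2$ determines $\pi$ and, once an $L^2$-approximating sequence $(f_n)$ is fixed, the corresponding family of values.

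I do not expect a genuine obstacle, the statement being essentially a bookkeeping of facts already in hand, and indeed it is flagged in the text as immediate. The single point deserving care is the order of the two limits. One must let $n \to \infty$ first, so as to replace $f_n$ by $f$ through continuity of the inner product, and only afterwards let $k_1 \to 0$ to cancel the spurious term $\frac{k_2}{k_1}\pi(k_1)$. The limits cannot be taken jointly or in the opposite order, because $\theta_{k_1,k_2}$ does not converge in $L^2$ as $k_1 \to 0$: its pointwise limit is $x \mapsto (k_2 - e^x)^+$, which is bounded but fails to be square-integrable (it does not decay at $-\infty$), and it is exactly this failure that motivated working with the combinations $\theta_{k_1,k_2}$ in the first place.
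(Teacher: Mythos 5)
Your argument is correct and is essentially the paper's own proof: the displayed identities preceding the statement (continuity of the $L^2$ pairing giving $\ip{\theta_{k_1,k_2}}{f_n}\to\pi(k_2)-\frac{k_2}{k_1}\pi(k_1)$, followed by the limit $k_1\to 0$ via \eqref{eq:coa}) recover $\pi$, and Proposition~\ref{prop:Fpi} then supplies the bijection with $F$, which is precisely how the paper concludes. Your additional remark on why the two limits cannot be interchanged is accurate and consistent with the paper's surrounding discussion, but it is not an extra ingredient the paper's proof requires.
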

\noindent Completely analogously, if $\pi(k_1)$ is known, then
\[
  \pi(k_2) = \frac{k_2}{k_1} \pi(k_1)
  + \lim_{n \to \infty} \ip[\big]{\theta_{k_1,k_2}}{f_n}
  = \frac{k_2}{k_1} \pi(k_1)
  + \lim_{n \to \infty} \Bigl( \pi^n(k_2)
  - \frac{k_2}{k_1}\pi^n(k_1) \Bigr).
\]
Note that although the function
$x \mapsto \frac{k_2}{k_1} (k_1 - e^x)^+$ does not converge to zero in
$L^\infty$ as $k_1 \to 0$, because
\[
  \sup_{x \in \erre} \frac{k_2}{k_1} (k_1 - e^x)^+ = k_2,
\]
the convergence in equation \eqref{eq:coa} also holds with $f \in L^1$,
i.e. without any extra integrability assumption on $f$, because
$\frac{k_2}{k_1} (k_1 - e^x)^+ f(x) \leq k_2 f(x)$ for every
$x \in \erre$, hence the claim follows by dominated convergence.


\section{Pricing estimates via Hermite series expansion}
\label{sec:Hp}
We are going to discuss the construction and some properties of a
class of approximations of the pricing functional $\pi$ for put
options with fixed time to maturity based on Hermite series expansion
of the density of logarithmic returns. Particular attention is given
to reducing as many computations as possible to integrals of
polynomials with respect to Gaussian measures. This is desirable in
practical implementations because such integrals, as seen in
\S\ref{ssec:gpi}, can be numerically computed in an efficient way.

Recall that time to maturity, denoted by $t$, is fixed. Let us define
the ($\cF_t$-measurable) random variable $X$ by
\[
S_t e^{\overline{q}t} = S_0 \exp\bigl( \sigma X + m \bigr),
\]
where $m$ and $\sigma>0$ are constants, and
\[
\overline{q} := \frac{1}{t} \int_0^t q_s\,ds
\]
is the mean dividend rate over the time interval $[0,t]$.
We assume that the law of $X$ admits a density $f$. Then
\begin{align*}
  \pi(k) = \E{(k-S_t)}^+ 
  &= \E \bigl( k - S_0 e^{\sigma X + m - \overline{q}t} \bigr)^+\\
  &= \int_\erre \bigl( k - S_0 e^{\sigma x + m - \overline{q}t} \bigr)^+ f(x)\,dx.
\end{align*}
Moreover, as the discounted yield process associated to the
(discounted) price process $S$ is a martingale, one has
\[
\int_\erre e^{\sigma x} f(x)\,dx = e^{-m}.
\]
Let us further assume that the density $f$ belongs to $L^2$, i.e. that
\[
\int_\erre f(x)^2\,dx < \infty.
\]
Note that $f \in L^1$ by definition of density, hence $f$ is
automatically in $L^2$ if, for instance, it is bounded (which is often
the case for many parametric families of densities that are used to
model returns).

By Lemma~\ref{lm:sigma} there exists $(\alpha_n) \in \ell^2$ such that
the sequence of functions $(f_N)$ defined by
\[
  f_N(x) := \sum_{n=0}^N \alpha_n h_n(\sqrt{2}x) e^{-x^2/2}, \qquad N
  \geq 0,
\]
converges to $f$ in $L^2$.
Setting
\[
\zeta_+ := \frac{1}{\sigma} \Bigl(
\log \frac{k}{S_0} - m + \overline{q}t \Bigr),
\]
one has
\[
\E{(k-S_t)}^+ = \int_{-\infty}^{\zeta_+} \bigl( k - e^{\sigma x +
m - \overline{q}t} \bigr) f(x)\,dx.
\]
Replacing $f$ by $f_N$ in the previous formula, one obtains
\begin{align*}
  \pi^N &:= \int_{-\infty}^{\zeta_+} \bigl( k - S_0 e^{\sigma x +
    m - \overline{q}t} \bigr) f_N(x)\,dx\\
  &= k \int_{-\infty}^{\zeta_+} f_N(x)\,dx - e^{-\overline{q}t}
  S_0\int_{-\infty}^{\zeta_+} f_N(x) e^{\sigma x + m} \,dx.
\end{align*}
Setting $\overline{f}_N(x):= e^{x^2/2} f_N(x) = \sum_{n=0}^N \alpha_n
h_n(\sqrt{2}x)$ and
\[
\zeta_- := \zeta_+ - \sigma = \frac{1}{\sigma} \Bigl( \log
\frac{k}{S_0} - m - \sigma^2 + \overline{q}t \Bigr),
\]
and writing
\[
-\frac12 x^2 + \sigma x + m = -\frac12 (x-\sigma)^2 + \frac12\sigma^2 + m,
\]
we have
\begin{align*}
  \int_{-\infty}^{\zeta_+} f_N(x) e^{\sigma x + m} \,dx
  &= e^{\sigma^2/2+m} \int_{-\infty}^{\zeta_+} \overline{f}_N(x)
     e^{-(x-\sigma)^2/2}\,dx\\
  &= e^{\sigma^2/2+m} \int_{-\infty}^{\zeta_-} \overline{f}_N(x+\sigma) 
     e^{-x^2/2}\,dx.
\end{align*}
Therefore
\[
\pi^N = k \int_{-\infty}^{\zeta_+} f_N(x)\,dx
  - e^{\sigma^2/2+m-\overline{q}t} S_0 \int_{-\infty}^{\zeta_-} 
\overline{f}_N(x+\sigma) e^{-x^2/2}\,dx,
\]
where
\begin{gather*}
  \int_{-\infty}^{\zeta_+} f_N(x)\,dx = \sum_{n=0}^N
  \alpha_n \int_{-\infty}^{\zeta_+} h_n(\sqrt{2}x) e^{-x^2/2}\,dx,\\
  \int_{-\infty}^{\zeta_-} \overline{f}_N(x+\sigma)
  e^{-x^2/2}\,dx = \sum_{n=0}^N \alpha_n \int_{-\infty}^{\zeta_-}
  h_n\bigl(\sqrt{2}(x+\sigma)\bigr) e^{-x^2/2}\,dx.
\end{gather*}
Note that all integrals with respect to the Gaussian density appearing
in the above expansions can be computed in closed form, in terms of
the Gaussian density and distribution functions, or in terms of
incomplete Gamma functions, as shown in \S\ref{ssec:gpi}.
\begin{rmk}
  The Black-Scholes formula is a special case of the above with $N=0$,
  replacing $\sigma$ and $m$ by $\sigma_0\sqrt{t}$ and
  $-\frac12 \sigma_0^2 t$, respectively, where $\sigma_0$ stands for
  the volatility of the underlying.
\end{rmk}

\medskip

We now discuss some properties of this class of approximations:
\begin{itemize}
\item[(i)] As it follows by \S\ref{ssec:equiv}, the convergence of
  $f_N$ to $f$ in $L^2$ as $N \to \infty$ does not imply that
  $\pi^N \to \pi$, but one has nonetheless enough information to
  uniquely determine $\pi$.
\item[(ii)] The function $f_N$ in general is not a density, as it is
  not guaranteed to be positive and its integral over the real line is
  not necessarily equal to one. Furthermore, in general $f_N$ does not
  converge to $f$ in $L^1$. It is known, however, that if $f \in L^p$,
  with $p \in \mathopen]4/3,4\mathclose[$, then $f_N \to f$ in $L^p$
  (see~\cite{AsWa}, where the authors prove that the result is sharp,
  in the sense that convergence fails for $p \in [1,4/3]$ and for
  $p \geq 4$, and \cite{Mucken:HL}). This is the case, for instance,
  if the density $f$ is bounded, in which case, by interpolation
  between $L^1$ and $L^\infty$, $f$ belongs to $L^p$ for every
  $p \in [1,\infty]$.
\item[(iii)] The martingale condition
  \[
    \int_\erre e^{\sigma x} f(x)\,dx = e^{-m}
  \]
  is not preserved substituting $f$ with $f_N$. However, a kind of
  ``asymptotic martingale property'' holds: note that
  \[
    \lim_{a \to \infty} \int_{-\infty}^a e^{\sigma x} f(x)\,dx =
    e^{-m}
  \]
  from below. Let $\varepsilon>0$ be arbitrary but fixed. Then there
  exists $a_0=a_0(\varepsilon)$ such that for every $a>a_0$
  \[
    e^{-m} - \varepsilon/2 \leq \int_{-\infty}^a e^{\sigma x} f(x)\,dx
    \leq e^{-m}.
  \]
  Let $a > a_0$ be arbitrary but fixed. By the Cauchy-Schwarz
  inequality,
  \begin{align*}
  \int_{-\infty}^a e^{\sigma x} \abs[\big]{f_N(x)-f(x)}\,dx
  &\leq \biggl( \int_{-\infty}^a e^{2\sigma x}\,dx\biggr)^{1/2}
    \biggl( \int_{-\infty}^a \abs[\big]{f_N(x)-f(x)}^2 \,dx\biggr)^{1/2}\\
  &\leq \biggl( \frac{e^{2\sigma a} - 1}{2\sigma} \biggr)^{1/2}
    \norm[\big]{f_N - f}_{L^2},
  \end{align*}
  hence
  \begin{equation}
    \label{eq:conva}
    \lim_{N \to \infty} \int_{-\infty}^a e^{\sigma x} f_N(x)\,dx
    = \int_{-\infty}^a e^{\sigma x} f(x)\,dx,
  \end{equation}
  i.e. there exists $N_0=N_0(a,\varepsilon)$ such that, for every
  $N > N_0$,
  \[
    \int_{-\infty}^a e^{\sigma x} f(x)\,dx - \varepsilon/2 \leq
    \int_{-\infty}^a e^{\sigma x} f_N(x)\,dx \leq
    \int_{-\infty}^a e^{\sigma x} f(x)\,dx + \varepsilon/2,
  \]
  hence
  \[
    e^{-m} - \varepsilon \leq \int_{-\infty}^a e^{\sigma x} f_N(x)\,dx
    \leq e^{-m} + \varepsilon/2.
  \]
\end{itemize}
Some of the above issues can be avoided assuming that there exists
$\delta>0$ such that
\[
  \widetilde{f} \colon x \mapsto e^{\sigma(1+\delta)\abs{x}}f(x) \in L^2.
\]
In fact, let $(\widetilde{f}_N)$ be a sequence of function converging
to $\widetilde{f}$ in $L^2$ and define $(f_N)$ by
\[
  e^{\sigma(1+\delta)\abs{x}} f_N(x) = \widetilde{f}_N
  \qquad \forall N \geq 0.
\]
Then
\begin{align*}
  \int_\erre \abs[\big]{f_N(x)-f(x)}\,dx 
  &= \int_\erre e^{-\sigma(1+\delta)\abs{x}} e^{\sigma(1+\delta)\abs{x}}
    \abs[\big]{f_N(x)-f(x)}\,dx\\
  &\leq \biggl( \int_\erre e^{-2\sigma(1+\delta)\abs{x}}\,dx \biggr)^{1/2}
    \biggl( \int_\erre e^{2\sigma(1+\delta)\abs{x}}
    \abs[\big]{f_N(x)-f(x)}^2\,dx \biggr)^{1/2},
\end{align*}
hence $f_N \to f$ in $L^1(\erre)$ as $N \to \infty$. In particular,
even though $f_N$ is in general not a density, as its $L^1$ norm may
not be equal to one, it does converge to a density as $N \to \infty$,
in the sense that ${\norm{f_N}}_{L^1} \to 1$. Moreover, as discussed
in subsection \ref{ssec:equiv}, convergence of $f_N$ to $f$ in $L^1$ implies
convergence of put option prices, i.e.  $\pi^N(k) \to \pi(k)$ for
every $k>0$. We also have
\[
  \int_\erre e^{\sigma x} \abs[\big]{f_N(x)-f(x)}\,dx
  = \int_\erre e^{\sigma x} e^{-\sigma(1+\delta)\abs{x}} e^{\sigma(1+\delta)\abs{x}}
    \abs[\big]{f_N(x)-f(x)}\,dx,
\]
where
$e^{\sigma x} e^{-\sigma(1+\delta)\abs{x}} \leq
e^{-\sigma\delta\abs{x}}$ for all $x \in \erre$, hence, since
$x \mapsto e^{-\sigma\delta\abs{x}} \in L^2$,
\[
  \lim_{N \to +\infty} \int_\erre e^{\sigma x} \abs[\big]{f_N(x)-f(x)}\,dx
  \lesssim \lim_{N \to +\infty}
  \biggl( \int_\erre e^{2\sigma(1+\delta)\abs{x}}
    \abs[\big]{f_N(x)-f(x)}^2\,dx \biggr)^{1/2} = 0.
\]
In particular,
\[
  \lim_{N \to +\infty} \int_\erre e^{\sigma x} f_N(x)\,dx
  = \int_\erre e^{\sigma x} f(x)\,dx,
\]
which is a kind of asymptotic martingale property improving upon equation
\eqref{eq:conva}.
Approximate pricing formulas for put options involving only integrals
of polynomials with respect to a standard Gaussian measure can also be
obtained proceeding analogously to the case treated above, even though
computations are more cumbersome. For the sake of completeness, full
detail is provided in the appendix.

The extra integrability assumption, however, could be too strong for
certain applications, as it implies that $X$ admits exponential
moments. In fact, if $x \mapsto e^{\alpha \abs{x}} f(x) \in L^2$,
then, for any $\beta<\alpha/2$, the Cauchy-Schwartz inequality yields
\begin{align*}
  \E e^{\beta \abs{X}} = \int_\erre e^{\beta \abs{x}} f(x) \,dx
  &= \int_\erre e^{(\beta-\alpha/2) \abs{x}} e^{\alpha/2 \abs{x}} f(x) \,dx\\
  &\leq \biggl( \int_\erre e^{(2\beta-\alpha) \abs{x}} \,dx \biggr)^{1/2}
  \biggl( \int_\erre e^{\alpha \abs{x}} f(x)^2 \,dx \biggr)^{1/2} < \infty.
\end{align*}
Note that exponential integrability of the return $X=\log S_T$ is not
needed to ensure that $S_T$ has finite expectation.


\section{Calibration of approximate pricing functionals}
\label{sec:cal}
For any $m \in \erre$, $\sigma \in \erre_+$, and
$\alpha = (\alpha_0,\ldots,\alpha_N) \in \erre^{N+1}$, the approximate
pricing method introduced in Section \ref{sec:Hp} can be represented as a
function $k \mapsto \widehat{\pi}(k;m,\sigma,\alpha)$, where
$m,\sigma,\alpha$ are treated as parameters (we omit the variable $t$
because we assume, as before, that time to maturity is fixed). Let
$(k_i)_{i \in I}$ be a set of strike prices for which prices of put
options $(\pi_i)_{i \in I}=(\pi(k_i))_{i \in I}$ are
observed. Moreover, we assume that $f_N \to f$ in $L^1$, so that the
correction procedure described in subsection \ref{ssec:equiv} is not
necessary. Even though this is a loss of (theoretical) generality, it
does \emph{not} imply any loss of precision in the empirical analysis
carried out in the next section.

The approximating Hermite pricing model with parameters
$(m,\sigma,\alpha)$ can be calibrated to observed prices via a
minimization problem of the form
\[
  \inf_{(m,\sigma,\alpha) \in \Theta} J(m,\sigma,\alpha),
  \qquad J(m,\sigma,\alpha) := L\bigl( (\pi_i),
    (\widehat{\pi}(k_i;m,\sigma,\alpha)) \bigr),
\]
where $\Theta$ stands for a subset of
$\erre \times \erre_+ \times \erre^{N+1}$ and $L$ is a loss function
defined on $\ell(I) \times \ell(I)$, with $\ell(I)$ denoting the
vector space of sequences indexed by the set $I$.
Since our main interest is the minimization of the relative pricing
error, we shall set
\[
  L(x,y) := \norm[\Big]{\frac{y-x}{x}}
  = \norm[\Big]{\frac{y}{x} - 1},
\]
where $y/x$ is defined pointwise, i.e. $(y/x)_i:=y_i/x_i$ for every
$i \in I$, and $\norm{\cdot}$ is a norm on $\ell(I)$, typically the
$\ell^2$ norm, corresponding to ordinary least squares, or the
$\ell^1$ norm, corresponding to least absolute deviation. Note that
$L(x,y)=+\infty$ as soon as $x_i=0$ for some $i \in I$. However, in
practice this does not cause trouble because no options with price
zero are traded anyway. On the other hand, out-of-the-money options
with very short time to maturity will have prices close to zero, hence
calibration is sensitive to the presence of such option prices in the
set $(\pi_i)_{i \in I}$. In practice, this is also not too
problematic, as one could use weighted norms on $\ell(I)$, or just
disregard options with prices too close to zero, i.e. select a
suitable subset $I'$ of the index set $I$.

Let us write the objective function $J$ as $J=\norm{R}$, with
\[
  R(m,\sigma,\alpha) = \biggl(\frac{1}{\pi_i}
    \int_\erre {(k_i-e^{\sigma x+m})}^+
    f_\alpha(x) e^{-\sigma(1+\delta)\abs{x}}\,dx - 1 \biggr)_{i \in I},
\]
where
\[
f_\alpha(x) := \sum_{j=0}^N \alpha_j h_j(\sqrt{2}x) e^{-x^2/2}.
\]
Denoting the cardinality of $I$ by $\abs{I}$, the relative error $R$
can be seen as a function from (a subset of)
$E := \erre \times \mathopen]0,\infty\mathclose[ \times \erre^{1+N}$
to $\erre^{\abs{I}}$, which turns out to be very regular.
\begin{prop}
  Let $n:=\abs{I}$. The relative error function $R$ belongs to
  $C^\infty(E;\erre^n)$.
\end{prop}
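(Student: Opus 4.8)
The plan is to prove smoothness componentwise. Writing $R=(R_i)_{i\in I}$ with $R_i(m,\sigma,\alpha)=\pi_i^{-1}\Phi_i(m,\sigma,\alpha)-1$ and
\[
\Phi_i(m,\sigma,\alpha):=\int_\erre (k_i-e^{\sigma x+m})^+\,f_\alpha(x)\,e^{-\sigma(1+\delta)\abs{x}}\,dx,
\]
it suffices, since each $\pi_i$ is a nonzero constant, to show $\Phi_i\in C^\infty(E)$. The dependence on $\alpha$ is settled at once: $f_\alpha$ is affine in $\alpha$, with $\partial_{\alpha_j}f_\alpha(x)=h_j(\sqrt2 x)e^{-x^2/2}$ independent of $\alpha$ and all second-order $\alpha$-derivatives vanishing, so the entire difficulty lies in the joint regularity in $(m,\sigma)$. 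By linearity it is enough to treat integrands in which $f_\alpha$ is replaced by a single term $h_j(\sqrt2 x)e^{-x^2/2}$.

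The first move is to remove the kink of the positive part, which is the obvious source of non-differentiability. Since $\sigma>0$ one has $(k_i-e^{\sigma x+m})^+=(k_i-e^{\sigma x+m})\ind{\{x<\zeta_i\}}$ with $\zeta_i:=\sigma^{-1}(\log k_i-m)$, and the substitution $u=\sigma x+m$ turns the parameter-dependent upper limit into the \emph{constant} $\log k_i$:
\[
\Phi_i(m,\sigma,\alpha)=\frac1\sigma\int_{-\infty}^{\log k_i}(k_i-e^u)\,f_\alpha\bigl(\tfrac{u-m}{\sigma}\bigr)\,e^{-(1+\delta)\abs{u-m}}\,du,
\]
where I used the homogeneity $\sigma\,\abs{(u-m)/\sigma}=\abs{u-m}$. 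The point of this change of variables is that the domain of integration no longer depends on the parameters, so the regularity of $\Phi_i$ is controlled purely by that of the integrand together with an integrable domination, with no boundary terms to track.

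Next I would apply the theorem on differentiation of parameter-dependent integrals, iterated to all orders. On the fixed interval $(-\infty,\log k_i]$ the factor $k_i-e^u$ is bounded (indeed $0\le k_i-e^u\le k_i$), and for $(m,\sigma,\alpha)$ in a compact set $K\subset E$ (so that $\sigma$ stays in some $[\sigma_-,\sigma_+]\subset\mathopen]0,\infty\mathclose[$) every partial derivative in $(m,\sigma,\alpha)$ of $f_\alpha((u-m)/\sigma)\,e^{-(1+\delta)\abs{u-m}}$ is a finite sum of terms of the form (polynomial in $u$) times $e^{-(u-m)^2/2\sigma^2}$ times $e^{-(1+\delta)\abs{u-m}}$. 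Each such term is bounded on $K$ by a single integrable function of $u$ (the Gaussian supplies the decay at $-\infty$, the weight is $\le 1$, and the polynomial growth is absorbed), so one may differentiate under the integral sign repeatedly and the resulting integrals are continuous in $(m,\sigma,\alpha)$; this would give $\Phi_i\in C^\infty(E)$ for the smooth directions.

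The main obstacle is that the weight $u\mapsto e^{-(1+\delta)\abs{u-m}}$ is \emph{not} smooth in $m$: it has a corner exactly at $u=m$, which lies inside the integration domain precisely when $m<\log k_i$. The delicate point — the one I expect to require real work — is to verify that integrating this corner over $u$ does not cost differentiability as $m$ crosses $\log k_i$, i.e. as the corner $u=m$ migrates to the endpoint $u=\log k_i$. I would attack this by splitting the integral at $u=m$, observing that on each subinterval the integrand is smooth and that the two contributions generated at $u=m$ match (so cancel) because the weight and $f_\alpha((u-m)/\sigma)$ are continuous across $u=m$; the only borderline behaviour is then concentrated at $m=\log k_i$, where the vanishing of $k_i-e^u$ at the endpoint must be exploited to absorb the derivative lost to the corner. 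This interaction between the weight's kink and the (transformed) endpoint is where I would concentrate the effort and scrutiny.
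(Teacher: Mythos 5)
Your opening moves coincide with the paper's: each $\pi_i$ is a nonzero constant, the dependence on $\alpha$ is affine, and the positive part is removed by cutting the integral at $\zeta_i=\sigma^{-1}(\log k_i-m)$. From there, however, the paper keeps the original variable $x$: for each fixed $x$ the integrand $(k_i-e^{\sigma x+m})f_\alpha(x)e^{-\sigma(1+\delta)\abs{x}}$ is $C^\infty$ in $(m,\sigma,\alpha)$ (the absolute value sits on the \emph{integration} variable, so it is harmless there), the parameter-derivatives are dominated by integrable functions, $\zeta_i$ is itself smooth, and the boundary term in the Leibniz rule vanishes because $k_i-e^{\sigma\zeta_i+m}=0$. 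Your substitution $u=\sigma x+m$ trades the moving endpoint for a weight $e^{-(1+\delta)\abs{u-m}}$ whose corner now migrates with the parameter $m$ through the interior of the domain; this is a self-inflicted complication that the paper's route never encounters.

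The genuine gap is that you do not carry out the step you yourself identify as the crux: the interaction of the corner at $u=m$ with the endpoint $u=\log k_i$ is left entirely as a plan (``I would attack this by splitting\dots'', ``this is where I would concentrate the effort''), so the proof is unfinished exactly where it is hardest. Moreover, if you execute the sketched splitting you will find that the first-order vanishing of $k_i-e^u$ at the endpoint buys only finitely many matching one-sided $m$-derivatives, not all of them. A model computation makes this concrete: with $f_\alpha\equiv 1$, $\delta=0$, $\sigma=1$ one has $\Phi(m)=\int_{-\infty}^{\log k}(k-e^u)\,e^{-\abs{u-m}}\,du$, which evaluates to $2k-\tfrac32 e^m-(\log k-m)e^m$ for $m<\log k$ and to $\tfrac12 k^2e^{-m}$ for $m>\log k$; the one-sided third derivatives at $m=\log k$ are $\tfrac32 k$ and $-\tfrac12 k$, so the cancellation mechanism you invoke secures $C^2$ but fails at order three across the set $\{m=\log k_i\}$. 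As written, your argument therefore cannot conclude $C^\infty$ on all of $E$ without a genuinely new ingredient (or without excising the hyperplanes $\{m=\log k_i\}$ from the domain), whereas the paper's proof avoids the interior corner altogether and reduces everything to smoothness of the integrand in the parameters plus the vanishing of the payoff at the moving boundary.
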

\begin{proof}
  Let the function
  $\zeta \colon \mathopen]0,\infty\mathclose[^n \times \erre \times
  \mathopen]0,\infty\mathclose[ \to \erre^n$ be defined by
  $\zeta(k,m,\sigma) = \frac{1}{\sigma} (\log k - m)$, with the
  logarithm taken componentwise. Then
  \begin{align*}
    \int_\erre {(k-e^{\sigma x + m})}^+ f_\alpha(x) e^{-\sigma(1+\delta)\abs{x}}\,dx
    &= \int_{-\infty}^{\zeta} (k-e^{\sigma x + m}) f_\alpha(x)
      e^{-\sigma(1+\delta)\abs{x}} \,dx\\
    &:= \biggl( \int_{-\infty}^{\zeta_i} (k_i-e^{\sigma x + m}) f_\alpha(x)
      e^{-\sigma(1+\delta)\abs{x}} \,dx \biggr)_{i=1,\ldots,n}
  \end{align*}
  The function $\alpha \mapsto f_\alpha(x)$ is linear, hence of class
  $C^\infty$ for every $x \in \erre$.
  Moreover, the functions $(m,\sigma) \mapsto e^{\sigma x+m}$ and
  $\sigma \mapsto e^{-\sigma(1+\delta)\abs{x}}$ are also of class
  $C^\infty$ for every $x \in \erre$. It follows immediately that
  $(m,\sigma,\alpha) \mapsto g(x;m,\sigma,\alpha) := (k-e^{\sigma x +
    m}) f_\alpha(x) e^{-\sigma(1+\delta)\abs{x}}$ is of class
  $C^\infty$ for every $x \in \erre$. Elementary calculus shows that
  derivatives of any order of
  $(m,\sigma,\alpha) \mapsto g(\cdot;m,\sigma,\alpha)$ are integrable
  on $\mathopen]-\infty,\zeta_i]$ for every $i=1,\ldots,n$, and
  $\zeta$ it itself of class $C^\infty$. Noting that
  $k-e^{\sigma\zeta+m}=0$ by definition of $\zeta$, the claim follows
  by the Leibniz rule for differentiation under the integral sign.
\end{proof}
The derivatives of $R$ can be computed easily: assuming for simplicity
$\abs{I}=1$, one has
\begin{align*}
  \partial_m R(m,\sigma,\alpha)
  &= \int_{-\infty}^\zeta e^{\sigma x + m} e^{-\sigma(1+\delta)\abs{x}}
    f_\alpha(x)\,dx\\
  \partial_\sigma R(m,\sigma,\alpha)
  &= -\int_\erre {(k-e^{\sigma x + m})}^+ e^{-\sigma(1+\delta)\abs{x}}
    (1+\delta)\abs{x} f_\alpha(x)\,dx\\
  &\quad - \int_{-\infty}^\zeta e^{\sigma x + m} e^{-\sigma(1+\delta)\abs{x}}
    x f_\alpha(x)\,dx,\\
  \partial_{\alpha_j} R(m,\sigma,\alpha)
  &= \int_\erre {(k-e^{\sigma x+m})}^+
    e^{-\sigma(1+\delta)\abs{x}} h_j(\sqrt{2}x) e^{-x^2/2}\,dx.
\end{align*}
Explicit expressions can also be obtained for derivatives of higher
order, which can be useful to check numerically first and second-order
conditions for optimality. For instance, if the norm in the definition
of $J$ is the $\ell^2$ norm, then the function
$(m,\sigma,\alpha) \mapsto \norm{R(m,\sigma,\alpha)}^2$ is
continuously differentiable and its (Fr\'echet) derivative is
$2\ip{R}{R'}$, where $R'$ can be identified with the $n$
$\erre^{N+3}$-valued functions
\[
  \bigl( \partial_m R_i, \partial_\sigma R_i, \partial_{\alpha_0},\ldots,
  \partial_{\alpha_N} R_i \bigr), \qquad i=1,\ldots,n.
\]
On the other hand, using the above explicit expressions to identify
possible local minima solving $\ip{R}{R'}=0$ may not be feasible, as
the equation is highly nonlinear.

If $J = \norm{R}_{\ell^1}$, that is, if the optimality criterion is
defined in terms of least absolute deviation, then $J$ is not
differentiable, because the $\ell^1$ norm is not. For practical
purposes, this suggests that derivative-free minimization algorithms
should be preferred.

\medskip

We are now going to discuss a convexity properties of $J$ with respect
to the variable $\alpha$ for fixed $m$ and $\sigma$.
It follows immediately from Section \ref{sec:Hp} that it is possible to
write
\[
\widehat{\pi}(k_i;m,\sigma,\alpha) = \sum_{j=0}^N \bigl( 
k_i \Phi^1_j(k_i;m,\sigma) - S_0 \Phi^2_j(k_i;m,\sigma) \bigr) \alpha_j,
\]
where $\Phi^1$ and $\Phi^2$ are $\erre^{N+1}$-valued functions
depending on the parameters $m$ and $\sigma$, but not on $\alpha$.
Therefore, defining the matrix $\Psi \in \erre^{n \times (N+1)}$ by
\begin{equation}
  \label{eq:Psi}
  \Psi_{ij} := \frac{1}{\pi_i} \bigl( k_i \Phi^1_j - S_0\Phi^2_j \bigr),
\end{equation}
we have
\[
  J(m,\sigma,\alpha) = \norm[\big]{\Psi(m,\sigma)\alpha-1}.
\]
Although the objective function $J$ is not convex, the function
$\alpha \mapsto J(m,\sigma,\alpha)$ is convex. This observation is
useful in view of the identity
\[
  \inf_{m,\sigma,\alpha} J(m,\sigma,\alpha)
  = \inf_{m,\sigma} \inf_\alpha J(m,\sigma,\alpha),
\]
where the minimizers of $\alpha \mapsto \norm{\Psi\alpha-1}$ can be
characterized by $\partial \norm{\Psi\alpha-1}=0$, with $\partial$
denoting the subdifferential in the sense of convex analysis. If
$\norm{\cdot}$ is the $\ell^2$ norm, then the function
$\alpha \mapsto \norm{\Psi\alpha-1}^2$ is Fr\'echet differentiable
with derivative $v \mapsto 2\ip{\Psi\alpha-1}{\Psi v}_{\ell^2}$, hence a
minimizer $\alpha_*=\alpha_*(m,\sigma)$ is characterized by
$\Psi^\top(\Psi\alpha_* - 1)=0$. In particular, if $\Psi^\top\Psi$ is
invertible, then the minimizer is unique and equal to
\[
\alpha_* = (\Psi^\top\Psi)^{-1}\Psi^\top 1_n,
\]
where $1_n=(1,\ldots,1) \in \erre^n$. Of course $\alpha_*$ is nothing
else than the estimate of $\alpha$ by ordinary least squares.

If instead the $\ell^1$ norm is used in the definition of $J$, the
function $\alpha \mapsto \norm{\Psi\alpha - 1}$ is not differentiable,
and its subdifferential is multivalued, hence not easy to deal with.
However, the minimization problem
$\inf_\alpha \norm{\Psi\alpha - 1}_{\ell^1}$ can be solved by linear
programming, writing it in the equivalent form
\begin{align*}
  \inf_{u,\alpha}\, &{\ip{1_n}{u}}_{\erre^n}\\
  &\text{s.t. } u \geq \Psi\alpha-1,\\
  &\phantom{\text{s.t. }} u \geq -(\Psi\alpha-1),
\end{align*}
or equivalently, in coordinates,
\begin{align*}
  \inf_{u,\alpha}\, &\sum_{i=1}^n u_i\\
  &\text{s.t. }  u_i \geq (\Psi\alpha)_i - 1,\\
  &\phantom{\text{s.t. }} u_i \geq -(\Psi\alpha)_i + 1
  \quad \forall i=1,\ldots,n.
\end{align*}
As already mentioned, using the $\ell^1$ norm is equivalent to
estimating $\alpha$ by least absolute deviation, a method that is less
sensitive to outliers than ordinary least squares, which corresponds
to using the $\ell^2$ norm.

\medskip

We are now going to consider additional constraints on $\alpha$, for
fixed $m$ and $\sigma$, implying that the approximation $f_N$ to the
density $f$ integrates to one and satisfies an approximate martingale
condition, i.e. that
\begin{equation}
  \label{eq:vincoli}
  \int_\erre f_N(x)\,dx = 1 \quad \text{ and } \quad
  \int_\erre e^{\sigma x + m} f_N(x)\,dx = 1,
\end{equation}
respectively. Defining the vector $c=(c_0,c_1,\ldots,c_N) \in \erre^{1+N}$
by
\[
c_n := \int_\erre h_n(\sqrt{2}x) e^{-x^2/2} \,dx,
\]
the first condition in equation \eqref{eq:vincoli} can be written as
\[
  \ip[\big]{c}{\alpha}_{\erre^{1+N}} = c_0\alpha_0 + c_1\alpha_1 +
  \cdots + c_N \alpha_N = 1.
\]
The vector $c$ can be computed in close form thanks to
Proposition~\ref{prop:cn}.

The approximate martingale condition, that is the second condition in
equation \eqref{eq:vincoli}, is equivalent to
\begin{equation}
  \label{eq:AMP}
  \sum_{n=0}^N \alpha_n \int_\erre h_n(\sqrt{2}x)
  e^{\sigma x - x^2/2}\,dx = e^{-m},
\end{equation}
where, by equation \eqref{eq:inta},
\[
  \int_\erre h_n(\sqrt{2}x) e^{\sigma x - x^2/2}\,dx
  = e^{\sigma^2/2} \int_\erre h_n(\sqrt{2}(x+\sigma)) e^{-x^2/2}\,dx.
\]
We are going to obtain closed-form expressions for the coefficients of
the polynomial $F_n \in \erre[\sigma]$ defined by
\[
F_n(\sigma) := \int_\erre h_n(\sqrt{2}(x+\sigma)) e^{-x^2/2}\,dx.
\]
More generally, let $P_n(x) \in \erre[x]$ be a polynomial of degree
$n$, and let us compute the coefficient of the polynomial in
$\erre[\sigma]$ defined by
\[
\int_\erre P_n(x+\sigma) e^{-x^2/2}\,dx.
\]
Writing $P_n(x) = a_0 + a_1x + \cdots + a_n x^n$, it is clear that
let us first compute, for any $m \in \enne$,
\[
\int_\erre (x+\sigma)^m e^{-x^2/2}\,dx.
\]
One has
\[
(x+\sigma)^m = \sum_{k=0}^m \binom{m}{k} x^{m-k} \sigma^k,
\]
hence
\[
  \int_\erre (x+\sigma)^m e^{-x^2/2}\,dx = \sum_{k=0}^m \sigma^k
  \binom{m}{k} \int_\erre x^{m-k} e^{-x^2/2}\,dx.
\]
It follows by the definition of the gamma function that
\[
  g(m,k) := \binom{m}{k} \int_\erre x^{m-k} e^{-x^2/2}\,dx
  = \binom{m}{k} 2^{\frac{m-k-1}{2}} \bigl( 1 + (-1)^{m-k} \bigr)
  \Gamma\Bigl( \frac{m-k+1}{2} \Bigr),
\]
hence
\[
  \int_\erre (x+\sigma)^m e^{-x^2/2}\,dx = \sum_{k=0}^m \sigma^k
  g(m,k),
\]
and finally
\begin{align*}
  F_n(\sigma) := \int_\erre P_n(x+\sigma) e^{-x^2/2}\,dx
  &= \sum_{m=0}^n a_m \int_\erre (x+\sigma)^m e^{-x^2/2}\,dx\\
  &= \sum_{m=0}^n a_m \sum_{k=0}^m g(m,k) \sigma^k\\
  &= \sum_{k=0}^n \biggl( \sum_{m \geq k}^n a_m g(m,k) \biggr)\sigma^k.
\end{align*}
Choosing $P_n(x) := h_n(\sqrt{2}x)$, the approximate martingale
condition \eqref{eq:AMP} can be written as
\[
  \alpha_0 F_0(\sigma) + \alpha_1 F_1(\sigma) + \cdots + \alpha_N
  F_N(\sigma) = e^{-m-\sigma^2/2}.
\]
The first few polynomials $F_n(\sigma)$ are
\begin{align*}
  F_0(\sigma) &= \sqrt{2\pi}, & F_1(\sigma) &= 2 \sqrt{\pi} \sigma,\\
  F_2(\sigma) &= \sqrt{2\pi} + 2\sqrt{2\pi} \sigma^2, &
  F_3(\sigma) &= 6\sqrt{\pi}\sigma + 4 \sqrt{\pi}\sigma^3\\
  F_4(\sigma) &= 3\sqrt{2\pi} + 12\sqrt{2\pi}\sigma^2
                 + 4\sqrt{2\pi} \sigma^4, &
  F_5(\sigma) &= 30\sqrt{\pi} \sigma + 40\sqrt{\pi} \sigma^3
                 + 8\sqrt{\pi} \sigma^5.
\end{align*}
Both constraints in equation \eqref{eq:vincoli} are affine in $\alpha$ (for
fixed $m$ and $\sigma$), in particular they are convex, as well as
their intersection $\Theta = \Theta(m,\sigma) \subset
\erre^{1+N}$. Moreover, since $F_n(\sigma)>0$ for every $\sigma>0$ and
$n \in \enne$, and $c_n=0$ for every odd $n$, the two constraints are
non-redundant. The minimization with the constraints in equation
\eqref{eq:vincoli} then becomes
\begin{equation}
  \label{eq:mimo}
\inf_{\alpha \in \Theta} \norm[\big]{\Psi(m,\sigma)\alpha - 1},
\end{equation}
which is still a convex minimization problem. If the norm is the
$\ell^1$ norm, adding the constraint $\alpha \in \Theta$ to the linear
programming formulation of the minimization is trivial. On the other
hand, if the norm is the $\ell^2$ norm, we can no longer use ordinary
least squares, but the minimization problem can be solved by quadratic
programming. In fact, setting $n:=\abs{I}$,
\[
  \norm[\big]{\Psi\alpha - 1}_{\ell^2}^2 =
  \ip{\Psi^\top\Psi\alpha}{\alpha} - 2 \ip{\Psi^\top 1_n}{\alpha} + n
\]
hence the minimization of $\norm{\Psi\alpha - 1}_{\ell^2}$ over
$\Theta$ is equivalent to the quadratic programming problem
\begin{align*}
  \inf_\alpha \, & \bigl( \ip{\Psi^\top\Psi\alpha}{\alpha} - 2 \ip{\Psi^\top 1_n}
                   {\alpha} \bigr)\\
  &\text{s.t. } \ip{c}{\alpha} = 1,\\
  &\phantom{\text{s.t. }} \ip{F(\sigma)}{\alpha} = e^{-m-\sigma^2/2}.
\end{align*}

\begin{rmk}
  If $N=1$, the constrained minimization problem \eqref{eq:mimo}
  degenerates, in the sense that the constraints already uniquely
  identify the solution. In fact, the two affine equations
  $\ip{c}{\alpha}=1$ and
  $\ip{F(\sigma)}{\alpha} = \exp(-m-\sigma^2/2)$ have a unique
  solution because the vectors $c$ and $F(\sigma)$ are
  independent. Similarly, if $N=2$, each constraint identifies a plane
  of $\erre^3$, hence their intersection is a line in $\erre^3$,
  i.e. the constrained minimization problem can be reduced, by a
  reparametrization, to an unconstrained minimization problem in one
  real variable. More precisely, let $A$ be the matrix defined by
  \[
  A =
  \begin{bmatrix}
    F \\ c
  \end{bmatrix},
  \]
  with $c$ and $F$ considered as row vectors, $v$ a vector in
  $\erre^3$ generating the kernel of $A$, and $\alpha_0$ any vector in
  $\Theta$, i.e. any solution to the equation
  \begin{equation}
    \label{eq:mater}
    \begin{bmatrix}
      F \\ c
    \end{bmatrix}
    \alpha_0 =
    \begin{bmatrix}
      e^{-m-\sigma^2/2} \\ 1
    \end{bmatrix}.
  \end{equation}
  Then $\Theta = \{\alpha_0 + av\}_{a \in \erre}$. Recalling that
  \[
    A = \begin{bmatrix}
      F \\ c
    \end{bmatrix}
    = \sqrt{2\pi}
    \begin{bmatrix}
      1 & 2\sigma & 1 + \sigma^2 \\
      1 & 0 & 1
    \end{bmatrix},
  \]
  explicit computations show that a generator of the kernel of $A$ is
  $(1,\sigma/2,-1)$, and a solution to equation \eqref{eq:mater} is
  \[
    \alpha_0 = \frac{1}{\sqrt{2\pi}}
    \biggl(1, \frac{e^{-m-\sigma^2/2} - 1}{2\sigma}, 0 \biggr).
  \]
\end{rmk}

\medskip

Finally, assume that, for given $m$ and $\sigma$,
$\alpha_* = \alpha_*(m,\sigma)$ is a minimizer of the function
$\alpha \mapsto \norm{\Psi\alpha - 1}$, with or without the
constraints in equation \eqref{eq:vincoli}, and recall that $\Psi$ depends on
$m$ and $\sigma$, but not on $\alpha$. Then
\begin{align*}
  \inf_{m,\sigma,\alpha} J(m,\sigma,\alpha)
  &= \inf_{m,\sigma} \norm[\big]{\Psi(\sigma,m)\alpha_*(\sigma,m)-1}\\
  &= \inf_{m,\sigma} \norm[\bigg]{\biggl(\frac{1}{\pi_i}
    \int_\erre {(k-e^{\sigma x+m})}^+
    f_{\alpha_*(m,\sigma)}(x)\,dx - 1 \biggr)}.
\end{align*}
Unfortunately it does not look possible to make any claim about the
convexity of the function to be minimized. Therefore, results obtained
by numerical minimization may depend on the initialization and may get
trapped at local minima. Empirical aspects related to this issue will
be discussed in the next section.

\section{Empirical analysis}
\label{sec:emp}
We are going to test the empirical performance of several instances of
the model introduced in Section \ref{sec:Hp}, that differ among each other
for the way they are calibrated and for some constraints on the
parameters $m$, $\sigma$, and $\alpha$.

The calibration of each instance of the model is done in the following
way: given a set of option prices observed at the same day and with
the same time to maturity, labeled from $1$ to $n$, for each
$j=1,\ldots,n$ we use the data with label
$(1,\ldots,j-1,j+1,\ldots,n)$ to calibrate the model, and with the
calibrated parameters we produce an estimate $\widehat{\pi}_j$ of the
price $\pi_j$ of the $j$-th option. The relative absolute pricing
error of $\widehat{\pi}_j$ with respect to $\pi_j$ is then defined
as $\abs{\widehat{\pi}_j/\pi_j-1}$.

Before describing each calibration method in detail and the
corresponding empirical performance, we briefly describe the data set
used.
We use S\&P500 index option data\footnote{The raw data are obtained
  from \emph{Historical Option Data}, see
  \texttt{www.historicaloptiondata.com}.} for the period January 3,
2012 to December 31, 2012.  During 2012 the annualized mean and
standard deviation of daily returns of the S\&P500 index were equal to
$11.09\%$ and $12.64\%$, respectively. During the same period the
1-year T-bill rate was very close to zero, with minimal variations: in
particular, its mean was equal to $0.16\%$, with a standard deviation
equal to $0.023\%$.  Our sample contains $77\,408$ observations of
European call and put options, $46\,854$ of which are put
options. Prices are averages of bid and ask prices. Data points with
time to maturity shorter than one day or volume less than $100$ are
eliminated. Descriptive statistics of the whole dataset are collected
in Table \ref{table01}.

\begin{table}[tbp]
\caption{Summary statistics for S\&P500 index options data}
\begin{center}
  \vspace{2mm} \parbox{\textwidth}{\footnotesize This table collects
    some simple statistics for prices of European call and put options
    on the S\&P500 index. The sample period is January 3, 2012 to
    December 31, 2012. Implied volatilities, expressed in percentage
    points, are annualized, time to maturity is expressed in days,
    strike and futures prices are expressed in index points.
    \vspace{4mm}} \footnotesize{{\begin{center} 
 \begin{tabular*}{\textwidth}{@{\extracolsep{\fill}}lccccccccc} 
 \toprule 
 \multicolumn{4}{c}{} & \multicolumn{5}{c}{Percentiles} & \multicolumn{1}{c}{}\\ 
 \cline{5-9}\\ 
 \multicolumn{4}{c}{}\\ 
Variable & Mean & Std& Min &$5\%$& $10\%$& $50\%$ & $90\%$ & $95\%$& Max  \\ 
 \midrule 
Call price             & 34.3& 98.8& 0.0& 0.1& 0.2& 9.2& 75.2& 115.5& 1270.0\\ 
Put price              & 21.3& 46.5& 0.0& 0.1& 0.1& 5.9& 58.2& 93.7& 1197.0\\ 
Implied $\sigma$       & 22.5& 11.5& 1.1& 11.9& 12.9& 19.3& 36.7& 44.9& 264.7\\ 
Implied ATM $\sigma$   & 22.0& 12.0& 1.1& 11.3& 12.1& 18.0& 38.6& 44.5& 202.8\\ 
Time to maturity & 96.7& 157.0& 1.0& 2.0& 4.0& 38.0& 269.0& 404.0& 1088.0\\ 
Strike price       & 1301.0& 208.4& 100.0& 950.0& 1075.0& 1345.0& 1480.0& 1525.0& 3000.0\\ 
Futures price       & 1374.4& 48.4& 1207.2& 1289.5& 1309.1& 1377.1& 1435.9& 1450.7& 1466.8\\ 
     \hline 
\bottomrule 
\end{tabular*} 
\end{center}

}}
\end{center}
\label{table01}
\end{table}

We focus on put options, and we eliminated from the dataset those
put options that (i) do not display price monotonicity with respect to
the strike price; (ii) have the same price and time to maturity but
different strike price.  In case (i) we eliminated options with low
trading volume breaking the monotonicity condition, and in case (ii)
we kept only the options with the highest and the lowest strike
prices.  This reduces the size of the sample to $43\,469$ put
contracts.  As is well known, index options on the S\&P500 are very
actively traded: the day with the largest number of unique put
contracts is December 21, 2012, that has $14$ expiration dates and
$269$ quoted put options prices (after the cleaning procedure
described above). The underlying price for this trading day was
$1\,430.20$ while the strike prices had values of $1\,100$, $1\,310$, and
$1\,425$ at the 10\textsuperscript{th}, 50\textsuperscript{th}, and
90\textsuperscript{th} percentile, respectively, with 93\% of the
contracts in the money. The time to maturity ranges from 4 days to
almost 3 years, in line with most other trading days.

\subsection{A simplified model}
The simplest calibration method that we consider slightly simplifies
the setting of Section \ref{sec:Hp}, assuming that there exists a constant
$\sigma_0>0$ such that
\[
\sigma = \sigma_0 \sqrt{t}, \qquad m := -\frac12 \sigma_0^2 t,
\]
where $t$ denotes time to maturity. Note that this can be considered
as a perturbation of the Black-Scholes model, where the standard
Gaussian density of suitably normalized returns is replaced by a
finite linear combination of (scaled) Hermite polynomials. In fact, in
the degenerate case where such linear combination reduces to a
multiple of the Hermite polynomial of order zero, one recovers
precisely the Black-Scholes model. Throughout this subsection we shall
write $\sigma$ in place of $\sigma_0$ for simplicity. The model's
calibration can thus be formulated as the minimization problem
\[
  \inf_{\substack{\sigma>0\\ \alpha \in \erre^{1+N}}}
  \norm{\Psi(\sigma)\alpha - 1}
  = \inf_{\sigma>0} \inf_{\alpha \in \erre^{1+N}}
  \norm{\Psi(\sigma)\alpha - 1},
\]
where $\Psi$ is the matrix defined in equation \eqref{eq:Psi} and
$\norm{\cdot}$ is a norm on $\erre^{n+1}$.
The first calibration technique that we consider starts, for any
$\sigma>0$, with the minimization problem
\begin{equation}
  \label{eq:ipsia}
  \inf_{\alpha \in \erre^{1+N}} \norm[\big]{\Psi(\sigma)\alpha -
    1}_{\ell^2},
\end{equation}
which can be solved by the standard ordinary least squares method to
provide a minimum point $\alpha_* = \alpha_*(\sigma)$, as discussed in
Section \ref{sec:cal}. Let
$E\colon \mathopen]0,\infty\mathclose[ \to \erre_+$ be the function
defined by
\[
E(\sigma) := \norm[\big]{\Psi(\sigma)\alpha_*(\sigma)-1}_{\ell^1},
\]
and consider the minimization problem
\[
  \inf_{\sigma>0} E(\sigma).
\]
Assuming that a minimum point $\sigma_*$ exists, we take $\sigma^*$
and $\alpha^*(\sigma^*)$ as estimates of the parameters of the model.
The calibration procedure thus obtained will be referred to as
procedure $\mathrm{H}_\sigma$.
The model produces pricing estimates that are consistently better than
the standard Black-Scholes one for every $N=1,\ldots,5$, in the sense
that the $10\%$, $25\%$, $50\%$, $75\%$, $90\%$, and $95\%$, quantiles
of the relative pricing error empirical distribution are smaller (up
to the $75\%$ quantiles they are around $50\%$ smaller). The pricing
error considerably improves with $N=2$, remains essentially unchanged
with $N=3$, and improves again quite drastically with $N=4$, to remain
again unchanged with $N=5$. The numerical results indicate that
Hermite approximations truncated at even degree $N$ are likely to be a
better choice, at least in the setting of calibration procedure
$\mathrm{H}_\sigma$. It should however be remarked that the pricing
performance of Black-Scholes with interpolated implied volatility is
still much better. Another important observation is that the size and
frequency of large pricing errors increase with $N$, consistently with
the ``conventional wisdom'' according to which the use of more and
more basis functions may cause numerical instability. Finally, the
relative error of Hermite pricing is particularly pronounced for
options with strike price lying far away from the strike prices of
observed options. This is checked by computing relative pricing errors
only for those options with strike $k$ such that
$k_\mathrm{min} < k < k_\mathrm{max}$, where $k_\mathrm{min}$ and
$k_\mathrm{max}$ are the smallest and the largest strike prices,
respectively, of the options used for calibration. One finds that
higher quantiles of the error distribution decrease considerably
(cf.~Table~\ref{tab:novAMP}). This observation is consistent with
approximations of densities by Hermite polynomials being usually good
around the center of the density, but not much so in the tails, where
they could even become negative (see, e.g., \cite{Kola} for a more
complete discussion). For this reason one cannot really expect good
approximate pricing for options that are deep out of the money, unless
prices of options with comparable strike prices are observed.
A further natural idea to try to limit the occasional large pricing
errors is to constrain the calibrated density to integrate to one and
to satisfy an approximate martingale condition, as in equation
\eqref{eq:vincoli}. In particular, the calibration procedure resulting
from adding these constraints to $\mathrm{H}_\sigma$, i.e. replacing
equation \eqref{eq:ipsia} by
\[
  \inf_{\alpha \in \Theta(\sigma)}
  \norm[\big]{\Psi(\sigma)\alpha - 1}_{\ell^2},
\]
where $\Theta$ accounts for the constraints mentioned above, as
discussed in Section \ref{sec:cal}, is labeled $\mathrm{H}_\sigma^{c,2}$.
Numerical results, however, are discouraging
(cf.~Table~\ref{tab:novAMP}), and suggest that the extra computational
burden is not worth.
\begin{rmk}
  The martingale condition is equivalent, under the present
  assumptions, to
  \begin{equation}
    \label{eq:lap}
    \E \exp \bigl( \sigma \sqrt{t} X - \frac12 \sigma^2 t \bigr) = 1.
  \end{equation}
  Recalling that $\E e^{\lambda X} = e^{\lambda^2/2}$ for every
  $\lambda \in \erre$ if and only if $X$ is a standard Gaussian random
  variable, it follows that if equation \eqref{eq:lap} is fulfilled
  for every $t \geq 0$, then $X$ is Gaussian. However, we do not
  require equation \eqref{eq:lap} to be verified for all $t$, but just
  for certain choices of $t$. Furthermore, we should recall that $X$
  itself depends on $t$, so it does not necessarily have to be
  Gaussian.
\end{rmk}

\begin{table}[tbp]
\caption{Pricing errors Black and Scholes}
\begin{center}
  \vspace{2mm} \parbox{\textwidth}{\footnotesize The table reports
    selected quantiles of the distribution of empirical pricing errors
    (in percentage points) of the Black and Scholes estimators. Each
    column matches the corresponding one in the tables relative to
    Hermite pricing. Figures in parenthesis refer to pricing errors
    obtained excluding strikes outside the interval of observed
    ones. \vspace{4mm}} \footnotesize{{\begin{center} 
 \begin{tabular*}{\textwidth}{@{\extracolsep{\fill}}lccccccc} 
 \toprule 
 \multicolumn{6}{c}{}\\ 
 \multicolumn{6}{c}{\textsc{Empirical distribution of Pricing errors}}\\ 
 \multicolumn{6}{c}{}\\ 
\multicolumn{6}{c}{Black \& Scholes}\\ 
 \midrule 
Quantiles &$N=1$ &$N=2$ & $N=3$ & $N=4$ & $N=5$ \\  
 \cmidrule{2-6} 
$10\%$     & 1.3 (1.4)& 0.1 (0.2)& 0.2 (0.2)& 0.1 (0.1)& 0.1 (0.1)\\ 
$25\%$     & 4.2 (4.2)& 0.5 (0.5)& 0.6 (0.5)& 0.4 (0.4)& 0.4 (0.4)\\ 
$50\%$     & 10.5 (10.1)& 1.9 (1.8)& 2.0 (1.9)& 1.6 (1.5)& 1.6 (1.5)\\ 
$75\%$     & 22.7 (21.0)& 6.8 (6.0)& 6.8 (6.1)& 6.4 (5.5)& 6.5 (5.6)\\ 
$90\%$     & 45.9 (40.5)& 19.7 (15.9)& 19.5 (15.8)& 22.4 (16.4)& 23.7 (17.3)\\ 
$95\%$     & 63.1 (56.5)& 37.4 (27.0)& 38.0 (27.0)& 51.3 (30.8)& 59.8 (34.1)\\ 
 \multicolumn{6}{c}{}\\ 
\multicolumn{6}{c}{Black \& Scholes with linearly interpolated $\sigma$}\\ 
\midrule 
Quantiles &$N=1$ &$N=2$ & $N=3$ & $N=4$ & $N=5$ \\  
 \cmidrule{2-6}  
$10\%$     & 0.1 (0.1)& 0.1 (0.1)& 0.1 (0.1)& 0.1 (0.1)& 0.1 (0.1)\\ 
$25\%$     & 0.2 (0.2)& 0.2 (0.2)& 0.2 (0.2)& 0.2 (0.3)& 0.2 (0.3)\\ 
$50\%$     & 1.0 (1.0)& 1.0 (1.0)& 1.0 (1.0)& 1.0 (1.0)& 1.0 (1.1)\\ 
$75\%$     & 4.7 (4.4)& 4.7 (4.4)& 4.8 (4.5)& 4.8 (4.5)& 4.9 (4.6)\\ 
$90\%$     & 16.4 (13.9)& 16.2 (13.9)& 16.2 (14.0)& 16.3 (14.1)& 16.3 (14.2)\\ 
$95\%$     & 31.4 (25.1)& 31.0 (25.1)& 30.7 (25.1)& 30.3 (25.3)& 30.2 (25.3)\\ 
 \midrule 
Test points& 43469 (37760)& 42755 (37522)& 41815 (37052)& 40830 (36461)& 39834 (35797)\\ 
\bottomrule 
\end{tabular*} 
\end{center}
}}
\end{center}
\label{table03}
\end{table}

\begin{table}[tbp]
  \caption{Pricing errors for Hermite models $\mathrm{H}_\sigma$ and
    $\mathrm{H}_\sigma^{c,2}$}
\begin{center}
  \vspace{2mm} \parbox{\textwidth}{\footnotesize The table reports
    selected quantiles of the distribution of empirical pricing errors
    (in percentage points) of the Hermite estimators
    $\mathrm{H}_\sigma$ and $\mathrm{H}_\sigma^{c,2}$. Figures in
    parenthesis refer to pricing errors obtained excluding strikes
    outside the interval of observed ones. \vspace{4mm}}
  \footnotesize{{\begin{center} 
 \begin{tabular*}{\textwidth}{@{\extracolsep{\fill}}lccccccc} 
 \toprule 
 \multicolumn{6}{c}{}\\ 
 \multicolumn{6}{c}{\textsc{Empirical distribution of Pricing errors}}\\ 
 \multicolumn{6}{c}{}\\ 
 \multicolumn{6}{c}{$\mathrm{H}_{\sigma}$}\\ 
 \midrule 
Quantiles & $N=1$ &$N=2$ & $N=3$ & $N=4$ & $N=5$ \\  
 \cmidrule{2-6} 
$10\%$     & 2.9 (3.2)& 0.4 (0.4)& 0.4 (0.5)& 0.1 (0.1)& 0.1 (0.2)\\ 
$25\%$     & 8.8 (8.6)& 1.5 (1.5)& 1.6 (1.6)& 0.5 (0.5)& 0.5 (0.5)\\ 
$50\%$     & 19.4 (18.3)& 4.6 (4.4)& 5.3 (5.0)& 1.8 (1.7)& 1.8 (1.7)\\ 
$75\%$     & 37.2 (33.3)& 12.2 (11.0)& 13.3 (12.0)& 6.6 (5.8)& 6.5 (5.7)\\ 
$90\%$     & 66.7 (60.8)& 27.9 (23.5)& 28.3 (23.5)& 21.0 (16.3)& 21.7 (16.5)\\ 
$95\%$     & 80.0 (74.8)& 47.5 (36.2)& 46.5 (35.3)& 44.9 (28.7)& 50.3 (30.5)\\ 
 \multicolumn{6}{c}{}\\ 
 \multicolumn{6}{c}{$\mathrm{H}_{\sigma}^{c,2}$}\\ 
 \midrule 
Quantiles & $N=1$ &$N=2$ & $N=3$ & $N=4$ & $N=5$ \\  
 \cmidrule{2-6} 
$10\%$     & 9.6 (10.8)& 10.1 (11.3)& 0.8 (0.9)& 0.6 (0.7)& 0.5 (0.6)\\ 
$25\%$     & 20.6 (21.5)& 23.5 (24.4)& 2.5 (2.6)& 2.2 (2.3)& 1.8 (1.8)\\ 
$50\%$     & 36.0 (36.8)& 38.7 (39.3)& 6.2 (6.3)& 6.7 (6.7)& 5.0 (4.7)\\ 
$75\%$     & 79.0 (79.3)& 75.0 (75.0)& 15.1 (14.6)& 15.7 (15.2)& 11.8 (10.5)\\ 
$90\%$     & 93.3 (92.9)& 92.3 (91.7)& 45.1 (37.0)& 34.5 (31.4)& 24.9 (20.4)\\ 
$95\%$     & 96.3 (96.0)& 95.7 (95.3)& 66.7 (64.5)& 56.4 (50.0)& 39.5 (30.9)\\ 
\midrule 
Test points& 43469 (37760)& 42755 (37522)& 41815 (37052)& 40830 (36461)& 39834 (35797)\\ 
\bottomrule 
\end{tabular*} 
\end{center}
}}
\end{center}
\label{tab:novAMP}
\end{table}

The calibration of model $\mathrm{H}_\sigma$ discussed so far is
somewhat inconsistent because it ``mixes'' the $\ell^2$ and the
$\ell^1$ norms. It is then natural to ask whether a consistent use of
the $\ell^1$ norm, i.e. of least absolute deviations, would improve
the statistics of relative pricing error. It turns out that this is
hardly the case, with empirical results suggesting that the (relative)
accuracy of procedure $\mathrm{H}_\sigma$ is very
satisfying. Moreover, the method of ordinary least squares is very
fast and less prone to numerical instability in comparison to the
method of least absolute deviations.  In order to substantiate these
claims, let us introduce further calibration procedures: if equation
\eqref{eq:ipsia} is replaced by
\begin{equation*}
  \inf_{\alpha \in \erre^{1+N}} \norm[\big]{\Psi(\sigma)\alpha -
    1}_{\ell^1},
\end{equation*}
the resulting procedure is labeled $\mathrm{H}^1_\sigma$.
Consider now the (numerical) minimization problem
\[
  \inf_{\substack{\sigma>0\\ \alpha \in \erre^{1+N}}}
  \norm[\big]{\Psi(\sigma)\alpha - 1}_{\ell^1}
\]
with starting point $(\sigma_{\mathrm{BS}},\alpha_{\mathrm{BS}})$,
where $\sigma_{\mathrm{BS}}$ is such that the $\ell^1$ distance
between observed option prices and Black-Scholes prices with
volatility $\sigma_{\mathrm{BS}}$ is minimized, and
$\alpha_{\mathrm{BS}} = (1/\sqrt{2\pi},0,\ldots,0)$. The resulting
calibration procedure is labeled $\mathrm{H}^{1,0}_\sigma$. If the
initial point for the minimization algorithm is chosen as the minimum
point of the $\mathrm{H}_\sigma$ procedure, the resulting procedure is
labeled $\mathrm{H}^{1,2}_\sigma$.
Note that, due to the lack of convexity of the function
$(\sigma,\alpha) \mapsto \norm{\Phi(\sigma)\alpha - 1}$, numerical
minimization algorithms are only expected to converge to a local
minimum around the initial point $(\sigma_0,\alpha_0)$, for which
there appears to be no ``canonical'' choice. Procedure
$\mathrm{H}_\sigma^{1,0}$ amounts to looking for a Hermite model
minimizing the $\ell^1$ error starting its search on the
``degenerate'' Hermite model of order zero, i.e. from the
Black-Scholes model. Similarly, procedure $\mathrm{H}_\sigma^{1,2}$
looks for a local minimum point around the optimal solution provided
by $\mathrm{H}_\sigma$. It is perhaps useful to recall that in both
procedures $\mathrm{H}_\sigma$ and $\mathrm{H}_\sigma^1$ the
minimization step in $\sigma$ can be done with numerical algorithms
that require just an upper and a lower bound, rather than a starting
point.

Numerical results on our dataset indicate that
\begin{itemize}
  \setlength{\itemsep}{0pt}
\item[(a)] $\mathrm{H}_\sigma^1$ performs slightly better than
  $\mathrm{H}_\sigma$ at the level of lower quantiles of the error
  distribution (up to $50\%$), and slightly worse at the level of
  higher quantiles, with the slight advantage reducing as the order $N$
  of the Hermite approximation increases;
\item[(b)] The performance of $\mathrm{H}_\sigma^{1,0}$ is overall
  comparable to the ones of both $\mathrm{H}_\sigma$ and
  $\mathrm{H}_\sigma^1$ for values of $N$ up to three, while it is
  clearly worse for values of $N=4$ and $N=5$;
\item[(c)] the minimum point of $\mathrm{H}_\sigma^{1,2}$ is
  consistently very close to the one of $\mathrm{H}_\sigma$, and,
  accordingly, the improvement in pricing error is very small across
  all values of $N$ and percentiles of the error
  distribution. Moreover, the distribution of pricing error becomes
  almost indistinguishable from the one of $\mathrm{H}^1_\sigma$ as
  $N$ increases (cf.~Table~\ref{tab:lad}).
\end{itemize}

\begin{table}[tbp]
  \caption{Pricing errors of Hermite models $\mathrm{H}_{\sigma}^{1}$,
    $\mathrm{H}_{\sigma}^{1,0}$, and $\mathrm{H}_{\sigma}^{1,2}$}
\begin{center}
  \vspace{2mm} \parbox{\textwidth}{\footnotesize The table reports
    selected quantiles of the distribution of the empirical pricing
    errors (in percentage points) of the Hermite estimators
    $\mathrm{H}_{\sigma}^{1}$, $\mathrm{H}_{\sigma}^{1,0}$, and
    $\mathrm{H}_{m,\sigma}^{1,2}$. Figures in parenthesis refer to
    pricing errors obtained excluding strikes outside the interval of
    observed ones.  \vspace{4mm}}
  \footnotesize{{\begin{center} 
 \begin{tabular*}{\textwidth}{@{\extracolsep{\fill}}lccccccc} 
 \toprule 
 \multicolumn{6}{c}{}\\ 
 \multicolumn{6}{c}{\textsc{Empirical distribution of Pricing errors}}\\ 
 \multicolumn{6}{c}{}\\ 
 \multicolumn{6}{c}{$\mathrm{H}_{\sigma}^{1}$}\\ 
 \midrule 
Quantiles & $N=1$ &$N=2$ & $N=3$ & $N=4$ & $N=5$ \\  
 \cmidrule{2-6} 
$10\%$     & 3.1 (3.3)& 0.4 (0.4)& 0.4 (0.4)& 0.1 (0.1)& 0.1 (0.1)\\ 
$25\%$     & 7.6 (7.5)& 1.3 (1.3)& 1.5 (1.5)& 0.4 (0.4)& 0.5 (0.5)\\ 
$50\%$     & 17.2 (16.1)& 4.3 (4.1)& 5.0 (4.7)& 1.7 (1.6)& 1.8 (1.7)\\ 
$75\%$     & 38.6 (34.3)& 12.2 (10.8)& 13.4 (12.1)& 6.5 (5.8)& 6.4 (5.7)\\ 
$90\%$     & 69.1 (66.6)& 31.0 (25.9)& 29.9 (24.9)& 21.5 (16.5)& 21.8 (16.6)\\ 
$95\%$     & 83.3 (79.1)& 52.2 (42.8)& 50.0 (38.7)& 47.2 (30.4)& 50.4 (32.0)\\ 
 \multicolumn{6}{c}{}\\ 
 \multicolumn{6}{c}{$\mathrm{H}_{\sigma}^{1,0}$}\\ 
 \midrule 
Quantiles & $N=1$ &$N=2$ & $N=3$ & $N=4$ & $N=5$ \\  
 \cmidrule{2-6} 
$10\%$     & 2.8 (3.0)& 0.5 (0.5)& 0.5 (0.5)& 0.6 (0.6)& 0.6 (0.6)\\ 
$25\%$     & 7.0 (6.9)& 1.6 (1.5)& 1.5 (1.5)& 1.8 (1.8)& 1.8 (1.8)\\ 
$50\%$     & 15.8 (14.8)& 4.9 (4.6)& 4.6 (4.3)& 5.3 (5.0)& 5.3 (4.9)\\ 
$75\%$     & 38.1 (33.5)& 13.4 (11.9)& 13.6 (11.9)& 15.5 (13.7)& 15.5 (13.7)\\ 
$90\%$     & 72.4 (66.7)& 31.5 (26.1)& 40.1 (31.9)& 50.0 (42.7)& 50.0 (40.9)\\ 
$95\%$     & 84.3 (80.4)& 50.2 (41.9)& 66.7 (54.7)& 73.9 (66.7)& 74.6 (66.7)\\ 
 \multicolumn{6}{c}{}\\ 
 \multicolumn{6}{c}{$\mathrm{H}_{\sigma}^{1,2}$}\\ 
 \midrule 
Quantiles & $N=1$ &$N=2$ & $N=3$ & $N=4$ & $N=5$ \\  
 \cmidrule{2-6} 
$10\%$     & 3.1 (3.3)& 0.3 (0.4)& 0.4 (0.4)& 0.1 (0.1)& 0.1 (0.2)\\ 
$25\%$     & 7.8 (7.6)& 1.3 (1.3)& 1.5 (1.5)& 0.5 (0.5)& 0.5 (0.5)\\ 
$50\%$     & 17.5 (16.3)& 4.3 (4.1)& 5.1 (4.8)& 1.8 (1.7)& 1.8 (1.7)\\ 
$75\%$     & 38.5 (34.2)& 12.1 (10.8)& 13.4 (12.0)& 6.7 (5.9)& 6.6 (5.7)\\ 
$90\%$     & 67.0 (64.2)& 29.6 (24.7)& 29.3 (24.3)& 21.4 (16.6)& 22.0 (16.6)\\ 
$95\%$     & 81.4 (76.7)& 49.7 (38.6)& 47.6 (36.6)& 45.0 (29.2)& 50.3 (30.9)\\ 
\midrule 
Test points& 43469 (37760)& 42755 (37522)& 41815 (37052)& 40830 (36461)& 39834 (35797)\\ 
\bottomrule 
\end{tabular*} 
\end{center}
}}
\end{center}
\label{tab:lad}
\end{table}

These empirical observations suggest that, in spite of its theoretical
inconsistency, procedure $\mathrm{H}_\sigma$ is not necessarily worse
than the sounder procedure $\mathrm{H}_\sigma^1$. One should also take
into account that, even though least absolute deviation is more robust
to outliers than ordinary least squares, standard numerical routines
for the former did not run nearly as smoothly as those for the latter
in our dataset (see Appendix~B for more detail about the numerical
implementation of $\mathrm{H}_\sigma^1$ via linear programming, as
outlined in Section \ref{sec:cal}).
Moreover, the rather simple-minded procedure $\mathrm{H}_\sigma^{1,0}$
turns out to be a viable alternative for lower values of $N$, even
though it is clearly considerably slower than $\mathrm{H}_\sigma$ and
$\mathrm{H}_\sigma^1$, as it involves the minimization of a function
on a higher-dimensional space. It seems interesting to observe that
the lack of convexity mentioned above appears to have a considerable
negative impact on the pricing error only for values of $N$ larger
than three. It is natural to speculate that, as the dimension of the
state space over which the objective function is minimized increases,
more and more local minima appear.

\subsection{Analysis of the full Hermite model}
We now turn to examining the empirical performance of the full model
introduced in Section \ref{sec:Hp}.  The simplest calibration procedure,
labeled $\mathrm{H}_{m,\sigma}$, consists in the minimization problem
\begin{equation}
  \label{eq:tsetse}
  \inf_{\substack{m \in \erre \\ \sigma \in \mathopen]0,\infty\mathclose[}}
  \norm[\big]{\Psi(m,\sigma)\alpha_*(m,\sigma)-1}_{\ell^1},
\end{equation}
where, for any real numbers $m$ and $\sigma$, with $\sigma>0$,
$\alpha_*(m,\sigma)$ is a minimum point of the convex minimization
problem
\begin{equation}
  \label{eq:mips}
  \inf_{\alpha \in \erre^{1+N}} \norm[\big]{\Psi(m,\sigma)\alpha-1}_{\ell^2}.
\end{equation}
The starting point for the numerical minimization algorithm over $m$
and $\sigma$ is chosen as the minimum point of calibration procedure
$H_\sigma$. More precisely, if $(\sigma_*,\alpha_*)$ is the
calibration produced by $H_\sigma$, the initial point for the
numerical solution of equation \eqref{eq:tsetse} is
\[
  m_0 := -\frac12 \sigma_*^2 t, \qquad
  \sigma_0 := \sigma_* \sqrt{t},
\]
where $t$ is the time to maturity. Adding the constraints in equation
\eqref{eq:vincoli} to \eqref{eq:mips} produces the calibration
procedure labeled $\mathrm{H}_{m,\sigma}^{c,2}$. In this case the
numerical solution of equation \eqref{eq:tsetse} takes as starting
point the minimum point obtained by calibration procedure
$\mathrm{H}_{\sigma}^{c,2}$, in the same sense already discussed
above.

Empirical results (see Table~\ref{tab:novAMPms}) show that the
extra degree of freedom of $\mathrm{H}_{m,\sigma}$ with respect to
$\mathrm{H}_\sigma$ produces massive improvements in pricing accuracy
only for $N \leq 3$, and a more modest improvement with $N=4$ and
$N=5$. In particular, for $N \leq 3$, all quantiles of the error
distribution up to $95\%$ are lower than the corresponding quantiles
for the models in the previous subsection. For $N=4$ and $N=5$,
quantiles up to $75\%$ improve, but become worse at higher
levels. This is not too surprising considering that extra parameters
tend to improve accuracy but to worsen stability.
On the other hand, the improvement of $\mathrm{H}_{m,\sigma}^{c,2}$
with respect $\mathrm{H}_\sigma^{c,2}$ is very strong for all values
of $N$, to the point that, for $N=5$, its performance is not much
worse than the ones of $\mathrm{H}_\sigma$ and
$\mathrm{H}_{m,\sigma}$. Moreover, the large errors produced by
$\mathrm{H}_\sigma^{c,2}$ for $N=5$ are considerably smaller than
those of other procedures. However, empirical observations already
made in the previous subsection are confirmed: passing from
$\mathrm{H}_{m,\sigma}$ to $\mathrm{H}_{m,\sigma}^{c,2}$ reduces the
number of large errors in some cases, but does not improve the
precision: the error distribution of $\mathrm{H}_{m,\sigma}^{c,2}$
dominates the one of $\mathrm{H}_{m,\sigma}$ up to the $75\%$ quantile
across all values of $N$.

An important numerical observation is that the estimated values of
$\alpha$ are often enormous (of order of magnitude $10^{150}$). Even
though such values can hardly be interpreted, they do not compromise,
in the overwhelming majority of cases, neither calibration error nor
pricing error. Perhaps somewhat surprisingly, at least from the point
of view of numerical stability, adding lower and upper bounds to
equation \eqref{eq:mips} produces \emph{worse} results (numerical output
relative to these attempts is not reproduced).
On the other hand, in the case of calibration procedure
$\mathrm{H}_{m,\sigma}^{c,2}$ the minimization problem \eqref{eq:mips}
subject to the additional constraints \eqref{eq:vincoli} is solved
numerically using quadratic programming, for which, to avoid numerical
crashes, it was necessary to constrain $\abs{\alpha}$ to be less than
the inverse of machine precision. This bound however is never reached,
and estimates of $\alpha$ are in this case much better behaved. On the
other hand, as already remarked, the calibration without constraints
displays better pricing accuracy in the large majority of cases.

The calibration procedure obtained replacing the $\ell^2$ norm in
equation \eqref{eq:mips} by the $\ell^1$ norm, which would naturally
be labeled $\mathrm{H}_{m,\sigma}^1$, turns out to be numerically very
unstable on our dataset, with minimization by linear programming, via
the GLPK routines, crashing too often to be usable. Roughly speaking,
the reason is that the matrix $\Psi(m,\sigma)$ becomes very singular
and the numerical linear programming routines break down. For this
reason, whenever $\Psi(m,\sigma)$ is too ``large'' (see Appendix B for
detail), we use instead the estimates produced by the procedures
$\mathrm{H}_{m,\sigma}^{1,0}$ and $\mathrm{H}_{m,\sigma}^{1,2}$, that
correspond to the minimization of the function
$(m,\sigma,\alpha) \mapsto \norm{\Psi(m,\sigma)\alpha-1}_{\ell^1}$
over the set
$\erre \times \mathopen]0,\infty\mathclose[ \times \erre^{1+N}$, using
as starting point the Black-Scholes parameters and the $H_\sigma$
parameters, respectively. With a slight abuse of notation, the
procedures so obtained are still labeled $\mathrm{H}_{m,\sigma}^{1,0}$
and $\mathrm{H}_{m,\sigma}^{1,2}$, respectively.  Note that it would
not make sense to use as starting point the parameters calibrated by
$\mathrm{H}_{m,\sigma}$ for the reasons discussed above.

The empirical results reported in Table~\ref{tab:ladms} show that
least absolute deviation estimates starting from the Black-Scholes
parameters are no longer comparable to the estimates produced by the
two-step OLS optimization (i.e. by $\mathrm{H}_{m,\sigma}$), even for
lower values of $N$. On the other hand, the performance of the
$\mathrm{H}_{m,\sigma}^{1,2}$ procedure is indeed comparable to the
one of $\mathrm{H}_{m,\sigma}$ for $N=4$ and $N=5$, but it does not
offer any worthy advantage, apart from the size of $\alpha$. In fact,
one should take into account that, for reasons already discussed
above, the minimization algorithm used by
$\mathrm{H}_{m,\sigma}^{1,2}$ is much slower than the two-step
procedure of $\mathrm{H}_{m,\sigma}$. Moreover,
$\mathrm{H}_{m,\sigma}^{1,2}$ has a performance that is only slightly
better than the one of $\mathrm{H}_{\sigma}^{1,2}$ in the range $N=2$
to $N=4$, and essentially identical for $N=5$ for percentiles up to
50\%. Therefore, also by considerations of computational complexity,
it does not seem particularly interesting. The results, however, are
important in the sense that they confirm the good empirical
performance of our proposed procedure $\mathrm{H}_{m,\sigma}$, in
spite of its theoretical inconsistency.

\begin{table}[tbp]
  \caption{Pricing errors of Hermite models $\mathrm{H}_{m,\sigma}$
    and $\mathrm{H}_{m,\sigma}^{c,2}$}
\begin{center}
  \vspace{2mm} \parbox{\textwidth}{\footnotesize The table reports
    selected quantiles of the distribution of the empirical pricing
    errors (in percentage points) of the Hermite estimators
    $\mathrm{H}_{m,\sigma}$ and $\mathrm{H}_{m,\sigma}^{c,2}$. Figures
    in parenthesis refer to pricing errors obtained excluding strikes
    outside the interval of observed ones.  \vspace{4mm}}
  \footnotesize{{\begin{center} 
 \begin{tabular*}{\textwidth}{@{\extracolsep{\fill}}lccccccc} 
 \toprule 
 \multicolumn{6}{c}{}\\ 
 \multicolumn{6}{c}{\textsc{Empirical distribution of Pricing errors}}\\ 
 \multicolumn{6}{c}{}\\ 
 \multicolumn{6}{c}{$\mathrm{H}_{m,\sigma}$}\\ 
 \midrule 
Quantiles & $N=1$ &$N=2$ & $N=3$ & $N=4$ & $N=5$ \\  
 \cmidrule{2-6} 
$10\%$     & 1.3 (1.4)& 0.1 (0.2)& 0.2 (0.2)& 0.1 (0.1)& 0.1 (0.1)\\ 
$25\%$     & 4.2 (4.2)& 0.5 (0.5)& 0.6 (0.5)& 0.4 (0.4)& 0.4 (0.4)\\ 
$50\%$     & 10.5 (10.1)& 1.9 (1.8)& 2.0 (1.9)& 1.6 (1.5)& 1.6 (1.5)\\ 
$75\%$     & 22.7 (21.0)& 6.8 (6.0)& 6.8 (6.1)& 6.4 (5.5)& 6.5 (5.6)\\ 
$90\%$     & 45.9 (40.5)& 19.7 (15.9)& 19.5 (15.8)& 22.4 (16.4)& 23.7 (17.3)\\ 
$95\%$     & 63.1 (56.5)& 37.4 (27.0)& 38.0 (27.0)& 51.3 (30.8)& 59.8 (34.1)\\ 
 \multicolumn{6}{c}{}\\ 
 \multicolumn{6}{c}{$\mathrm{H}_{m,\sigma}^{c,2}$}\\ 
 \midrule 
Quantiles & $N=1$ &$N=2$ & $N=3$ & $N=4$ & $N=5$ \\  
 \cmidrule{2-6} 
$10\%$     & 2.5 (2.6)& 1.7 (1.8)& 0.4 (0.5)& 0.3 (0.4)& 0.2 (0.2)\\ 
$25\%$     & 6.2 (6.0)& 4.5 (4.4)& 1.5 (1.5)& 1.2 (1.2)& 0.6 (0.6)\\ 
$50\%$     & 13.8 (12.8)& 10.7 (10.0)& 4.5 (4.2)& 3.6 (3.4)& 2.2 (2.1)\\ 
$75\%$     & 31.1 (27.5)& 24.9 (22.4)& 12.0 (10.6)& 10.2 (9.3)& 7.1 (6.5)\\ 
$90\%$     & 64.4 (57.7)& 53.2 (48.7)& 28.4 (23.2)& 24.4 (20.6)& 19.5 (17.0)\\ 
$95\%$     & 77.6 (72.9)& 67.6 (66.0)& 50.0 (37.9)& 41.5 (32.4)& 35.5 (29.7)\\ 
 \midrule 
Test points& 43469 (37760)& 42755 (37522)& 41815 (37052)& 40830 (36461)& 39958 (35921)\\ 
\bottomrule 
\end{tabular*} 
\end{center}
}}
\end{center}
\label{tab:novAMPms}
\end{table}

\begin{table}[tbp]
  \caption{Pricing errors of Hermite models $\mathrm{H}_{m,\sigma}^{1,0}$
    and $\mathrm{H}_{m,\sigma}^{1,2}$}
\begin{center}
  \vspace{2mm} \parbox{\textwidth}{\footnotesize The table reports
    selected quantiles of the distribution of the empirical pricing
    errors (in percentage points) of the Hermite estimators
    $\mathrm{H}_{m,\sigma}^{1,0}$ and $\mathrm{H}_{m,\sigma}^{1,2}$. Figures
    in parenthesis refer to pricing errors obtained excluding strikes
    outside the interval of observed ones.  \vspace{4mm}}
  \footnotesize{{\begin{center} 
 \begin{tabular*}{\textwidth}{@{\extracolsep{\fill}}lccccccc} 
 \toprule 
 \multicolumn{6}{c}{}\\ 
 \multicolumn{6}{c}{\textsc{Empirical distribution of Pricing errors}}\\ 
 \multicolumn{6}{c}{}\\ 
 \multicolumn{6}{c}{$\mathrm{H}_{m,\sigma}^{1,0}$}\\ 
 \midrule 
Quantiles & $N=1$ &$N=2$ & $N=3$ & $N=4$ & $N=5$ \\  
 \cmidrule{2-6} 
$10\%$     & 2.2 (2.3)& 0.8 (0.9)& 0.5 (0.6)& 0.9 (1.1)& 0.5 (0.6)\\ 
$25\%$     & 5.4 (5.3)& 2.5 (2.4)& 1.8 (1.7)& 3.2 (3.2)& 1.7 (1.6)\\ 
$50\%$     & 12.9 (11.9)& 7.0 (6.4)& 5.8 (5.3)& 8.7 (8.2)& 5.0 (4.6)\\ 
$75\%$     & 30.3 (26.7)& 18.7 (16.2)& 17.7 (15.8)& 22.0 (19.5)& 14.1 (12.4)\\ 
$90\%$     & 63.0 (56.5)& 46.9 (39.0)& 47.8 (40.0)& 51.5 (46.4)& 39.2 (32.2)\\ 
$95\%$     & 76.2 (71.0)& 66.7 (61.7)& 66.7 (61.6)& 67.5 (65.8)& 66.7 (55.6)\\ 
 \multicolumn{6}{c}{}\\ 
 \multicolumn{6}{c}{$\mathrm{H}_{m,\sigma}^{1,2}$}\\ 
 \midrule 
Quantiles & $N=1$ &$N=2$ & $N=3$ & $N=4$ & $N=5$ \\  
 \cmidrule{2-6} 
$10\%$     & 2.5 (2.6)& 0.4 (0.4)& 0.4 (0.4)& 0.1 (0.1)& 0.1 (0.1)\\ 
$25\%$     & 6.2 (6.0)& 1.3 (1.3)& 1.4 (1.4)& 0.4 (0.4)& 0.5 (0.5)\\ 
$50\%$     & 14.5 (13.4)& 4.2 (4.0)& 4.7 (4.4)& 1.7 (1.6)& 1.8 (1.7)\\ 
$75\%$     & 34.3 (30.1)& 12.0 (10.7)& 12.9 (11.5)& 6.5 (5.7)& 6.9 (6.0)\\ 
$90\%$     & 66.7 (60.0)& 30.5 (25.4)& 29.4 (24.6)& 21.5 (16.5)& 25.6 (19.0)\\ 
$95\%$     & 79.6 (73.8)& 50.5 (41.3)& 48.9 (37.7)& 46.7 (30.0)& 66.7 (40.4)\\ 
\midrule 
Test points& 43469 (37760)& 42755 (37522)& 41815 (37052)& 40830 (36461)& 39834 (35797)\\ 
\bottomrule 
\end{tabular*} 
\end{center}
}}
\end{center}
\label{tab:ladms}
\end{table}

\section{Empirical analysis on synthetic data}
\label{sec:emps}
We are going to describe the results of an empirical analysis,
analogous to the one described in the previous section, on a set of
synthetic data, generated using Hermite processes (see
Appendix~\ref{app:HermProc} for basic definitions and results, and,
e.g., \cite{Ra:Herm} and references therein for financial
applications).\footnote{We thank the referee for the suggestion to
  consider data generated by Hermite processes.}
Such an analysis can be considered as a sort of empirical robustness
test, as a financial interpretation along the lines described in
previous sections is, in general, not possible. More precisely, we
shall produce synthetic data of the type
\begin{equation}
  \label{eq:sinte}
  \pi(k) = \int_\erre \bigl( k - Y_0e^{\sigma x + m} \bigr)^+\,dF(x),
\end{equation}
where \(Y_0>0\), \(\sigma>0\) and \(m\) are constants, and \(F\) is the
distribution function of a (non-Gaussian) Hermite process at time one.
The values \(\pi(k)\), however, cannot be interpreted as prices of
options in a Hermite market, as Hermite processes are not
semimartingales, hence the standard pricing methods in terms of
expectations under a risk-neutral measure do not make sense any
longer.

On the other hand, the problem of estimating \(\pi(k)\) (or, more
generally, of estimating \(F\), as explained in \S\ref{ssec:equiv})
from a finite set of observations \((\pi(k_i))_{i \in I}\) is
meaningful for \emph{any} distribution function \(F\), independently
of any financial interpretation. It is in this sense that the
numerical results obtained should be interpreted as a sort of
robustness test.

\medskip

We produced synthetic values of \(\pi(k)\), as defined by
\eqref{eq:sinte}, with the parameters \(k\), \(Y_0\), \(\sigma\) and
\(m\) chosen in terms of the dataset considered in the previous
section.
As a first step, we randomly selected 25 days from the dataset. For
each day there are ``blocks'' of options with the same time to
maturity. Let us now consider a day and a block fixed: we set \(Y_0\)
equal to the price of the underlying \(S_0\), and denoting the time to
maturity and the calibrated Black-Scholes implied volatility for the
block under consideration by \(t\) and \(\sigma_0\), respectively, we
set
\begin{equation}
  \label{eq:parsint}
  \sigma = \sigma_0 \sqrt{t}, \qquad
  m = -\frac12 \sigma^2_0 t.
\end{equation}
Furthermore, random samples of a Hermite process with parameters
\(k=3\) and \(H=0.63\) evaluated at time one, denoted by
\(Z^3_{0.63}(1)\), are generated using the weak convergence results
gathered in Appendix~\ref{app:HermProc}.\footnote{More precisely, one
  should say that the simulated random samples are only in the domain
  of attraction of the distribution of the random variable
  \(Z^3_{0.63}(1)\) -- see Appendix~\ref{app:HermProc} for more
  detail.} The empirical distribution function of the set of simulated
random samples is denoted by \(F\). Finally, we computed \(\pi(k)\) as
in \eqref{eq:sinte} for the values of \(k\) corresponding to the
strike prices in the block under consideration in the original
dataset. The whole procedure is repeated for each block of each day,
thus obtaining a synthetic dataset that has approximately 10\% the
size of the real dataset used in the previous section.  The empirical
analysis described in the previous section is then applied to the
synthetic data thus produced.

Before describing the results of the analysis, some remarks are in
order. The choice of the parameters \(\sigma\) and \(m\)
(see~\eqref{eq:parsint} above) is guided simply by an analogy to the
case discussed in the previous section. In this regard it is probably
worth mentioning that the process
\(Y_t = Y_0 \exp\bigl( \sigma Z^k_H(t) \bigr)\) does \emph{not} have,
in general, finite expectation, as elements of the \(n\)-th Wiener
chaos, with \(n \geq 3\), do not admit any exponential moments
(see~\cite[Corollary~6.13]{Janson}). However, since
\(0 \leq (k-e^x)^+ \leq k\) for every \(x \in \erre\), the
expectations \(\E(k-Y_t)^+\) are always finite. Analogously, the
distribution of \(Z^k_H(1)\) is not expected to have a density in
\(L^2(\erre)\) (see Appendix~\ref{app:dens} for more detail). However,
the empirical distribution function \(F\) is compactly supported and
bounded, hence (a smoothed version of) its density is certainly in
\(L^2(\erre)\).

\medskip

Let us now discuss the empirical results obtained on the synthetic
dataset, on which we have applied the estimation methods
\(\mathrm{H}_\sigma\), \(\mathrm{H}_{m,\sigma}\),
\(\mathrm{H}^{1,0}_\sigma\), and \(\mathrm{H}^{1,2}_\sigma\), in
addition to the Black-Scholes methods with implied volatility and with
interpolation on the implied volatility curve (to which we shall refer
as BS and \(\mathrm{BS_i}\), respectively). Methods involving least
absolute deviation techniques implemented via linear programming have
been excluded because of their numerical instability (see the
corresponding remarks in the previous section and
Appendix~\ref{app:num}). Similarly, the constrained methods would not
make sense in the present setting, as the synthetic data cannot be
interpreted as prices, as already discussed, hence the approximate
martingale property would just be a spurious constraint.\footnote{In
  fact, these methods produce results that are consistently worse than
  those of \(\mathrm{H}_\sigma\), and are not reproduced here.}

Even though we shall make some comparisons between the empirical
performance of the various methods on the real and the synthetic
datasets, these must of course be taken with caution, at least because
the latter dataset is much smaller than the former.

It turns out that also on synthetic data the \(\mathrm{BS_i}\) method
displays an outstanding performance, that is much better than what the
various other methods can achieve, consistently over all degrees of
Hermite polynomials considered and all quantiles of the error
distribution. The \(\mathrm{BS_i}\) method achieves better accuracy on
the synthetic dataset than on real data. This may be explained by the
fact that synthetic data are more ``regular'' than real data, in the
sense that the latter are more noisy, hence may have a more irregular
distribution. It is interesting also to observe that, for synthetic
data, accuracy within the hull is much better than the accuracy on the
whole dataset (i.e. including out-of-the-hull points). This points to
the plausibility of the previous argument, in the sense that
regularity of the distribution of synthetic data implies that
estimates in the hull are particularly precise.
On the other hand, the ``naive'' Black-Scholes estimator BS performs
considerably worse on synthetic data than on real data. A possible
explanation for this is that Hermite processes of order three, as the
one used to generate the data, are strongly non-Gaussian. In a
somewhat loose way, one may argue that the non-Gaussianity of the
Hermite process used here is stronger than the non-Gaussianity of
returns in real data.

Method \(\mathrm{H}_\sigma\) produces estimates that are considerably
poorer than those produced by \(\mathrm{BS_i}\), in analogy with the
corresponding results for the real dataset. On the other hand, the
accuracy improves considerably with respect to the BS method, showing
that the Hermite approximation method captures deviations from
Gaussianity to a certain extent. Note also that there is essentially
no improvement passing from \(N=4\) to \(N=5\). It should also be
mentioned that the method performs worse on the synthetic data than on
the real data with \(N \leq 3\), while with \(N=4,5\) the performance
is very similar in the hull, but still worse (for the synthetic data)
out of the hull. This is probably still due to a stronger deviation
from Gaussianity in the synthetic data that cannot be captured
sufficiently well by Hermite approximations of the density of order up
to five.

Method \(\mathrm{H}^{1,0}_\sigma\) performs significantly worse than
the much quicker method \(H_\sigma\). As already remarked, the
optimization algorithm suffers from the existence of many local
minima, and the local minimum closest to the BS parameters, to which
it converges, may arguably be quite far from the global minimum. This
phenomenon was already observed in the case of real data, and it is
even more pronounced for synthetic data. It is perhaps worth noting
that the method has a median error that decreases as the order \(N\)
increases, but produces large errors that strongly influence the error
distributions at higher quantiles.

In contrast to \(\mathrm{H}^{1,0}_\sigma\), method
\(\mathrm{H}^{1,2}_\sigma\) searches for a local minimum, in the
\(\ell^1\) sense, starting from the parameters of
\(\mathrm{H}_\sigma\). This method, that could be seen as a refinement
of method \(\mathrm{H}_\sigma\), has an entirely similar accuracy to
that of the latter across all values of \(N\). Strictly speaking, this
may just be explained by the existence of a local minimum quite close
to the initial datum for the search algorithm. In practice, however,
in analogy to the case of real data, this shows that the much quicker
method \(\mathrm{H}_\sigma\), although theoretically not fully
consistent, produces estimates that can hardly be improved by standard
(non-global) optimization algorithms.
It appears interesting to observe that the median error is worse in
the synthetic data than in the real data, but that the frequency of
large errors, at least for sufficiently high order \(N\), is lower for
synthetic data than for real data. This might be consistent with real
data having a higher Gaussianity than synthetic data, but more
extreme outliers.

Finally, the extra parameter \(m\) allows method
\(\mathrm{H}_{m,\sigma}\) to achieve a higher accuracy than the
simpler method \(\mathrm{H}_\sigma\). This is of course not surprising
from the mere statistical viewpoint, but it may be somewhat
interesting nonetheless, considering how the synthetic data are
generated (i.e., roughly speaking, choosing \(m\) as in
\(\mathrm{H}_\sigma\)). This observation can be interpreted as further
evidence for a deviation from Gaussianity of Hermite processes that is
hard to capture with Hermite approximations (of order up to five, at
least).


\begin{table}[tbp]
\caption{Synthetic data: pricing errors Black and Scholes}
\begin{center}
  \vspace{2mm} \parbox{\textwidth}{\footnotesize The table reports
    selected quantiles of the distribution of empirical pricing errors
    (in percentage points) of the Black and Scholes estimators.
    Results are calculated on a set of synthetic data, generated using
    Hermite processes.  Each column matches the corresponding one in
    the tables relative to Hermite pricing. Figures in parenthesis
    refer to pricing errors obtained excluding strikes outside the
    interval of observed ones. \vspace{4mm}}
  \footnotesize{{\begin{center} 
 \begin{tabular*}{\textwidth}{@{\extracolsep{\fill}}lccccccc} 
 \toprule 
 \multicolumn{6}{c}{}\\ 
 \multicolumn{6}{c}{\textsc{Empirical distribution of Pricing errors}}\\ 
 \multicolumn{6}{c}{}\\ 
 \multicolumn{6}{c}{Black \& Scholes}\\ 
 \midrule 
Quantiles & $N=1$ &$N=2$ & $N=3$ & $N=4$ & $N=5$ \\  
 \cmidrule{2-6} 
$10\%$     & 3.4 (3.5)& 0.5 (0.6)& 0.4 (0.4)& 0.1 (0.1)& 0.1 (0.1)\\ 
$25\%$     & 9.5 (9.3)& 2.0 (1.9)& 1.6 (1.5)& 0.7 (0.6)& 0.4 (0.4)\\ 
$50\%$     & 21.1 (19.6)& 5.4 (5.0)& 5.0 (4.7)& 2.3 (2.1)& 1.8 (1.7)\\ 
$75\%$     & 47.7 (40.9)& 12.3 (10.7)& 11.5 (10.2)& 5.6 (4.7)& 5.0 (4.3)\\ 
$90\%$     & 78.2 (72.0)& 27.6 (19.8)& 24.6 (18.2)& 13.0 (8.8)& 11.4 (8.2)\\ 
$95\%$     & 88.1 (83.4)& 53.6 (30.7)& 46.9 (27.3)& 36.5 (13.9)& 27.6 (12.6)\\ 
 \multicolumn{6}{c}{}\\ 
 \multicolumn{6}{c}{Black \& Scholes with linearly interpolated $\sigma$}\\ 
 \midrule 
Quantiles & $N=1$ &$N=2$ & $N=3$ & $N=4$ & $N=5$ \\  
 \cmidrule{2-6} 
$10\%$     & 0.0 (0.0)& 0.0 (0.0)& 0.0 (0.0)& 0.0 (0.0)& 0.0 (0.0)\\ 
$25\%$     & 0.1 (0.1)& 0.1 (0.1)& 0.1 (0.1)& 0.1 (0.1)& 0.1 (0.1)\\ 
$50\%$     & 0.4 (0.3)& 0.4 (0.3)& 0.4 (0.3)& 0.3 (0.3)& 0.3 (0.3)\\ 
$75\%$     & 1.3 (1.1)& 1.3 (1.1)& 1.2 (1.0)& 1.2 (1.0)& 1.1 (1.0)\\ 
$90\%$     & 5.1 (3.5)& 4.8 (3.5)& 4.3 (3.2)& 3.9 (3.1)& 3.5 (3.0)\\ 
$95\%$     & 13.4 (7.7)& 12.5 (7.6)& 10.5 (7.1)& 9.7 (6.5)& 8.1 (5.7)\\ 
 \midrule 
Test points& 4561 (3994)& 4501 (3974)& 4401 (3924)& 4301 (3864)& 4187 (3788)\\ 
\bottomrule 
\end{tabular*} 
\end{center}}}
\end{center}
\label{tab:bsps}
\end{table}

\begin{table}[tbp]
  \caption{Synthetic data: errors for Hermite models
    $\mathrm{H}_\sigma$ and $\mathrm{H}_{m,\sigma}$}
\begin{center}
  \vspace{2mm} \parbox{\textwidth}{\footnotesize The table reports
    selected quantiles of the distribution of empirical estimation
    errors (in percentage points) of the Hermite estimators
    $\mathrm{H}_\sigma$ and $\mathrm{H}_{m,\sigma}$.  Results are
    calculated on a set of synthetic data, generated using Hermite
    processes.  Figures in parenthesis refer to estimation errors
    obtained excluding values of $k$ outside the interval of observed
    ones. \vspace{4mm}} \footnotesize{{\begin{center} 
 \begin{tabular*}{\textwidth}{@{\extracolsep{\fill}}lccccc} 
 \toprule 
 \multicolumn{6}{c}{}\\ 
 \multicolumn{6}{c}{\textsc{Empirical distribution of Pricing errors}}\\ 
 \multicolumn{6}{c}{}\\ 
 \multicolumn{6}{c}{$H_{\sigma}$}\\ 
 \midrule 
Quantiles & $N=1$ &$N=2$ & $N=3$ & $N=4$ & $N=5$ \\  
 \cmidrule{2-6} 
$10\%$     & 5.0 (5.1)& 0.5 (0.5)& 0.8 (0.8)& 0.4 (0.4)& 0.1 (0.1)\\ 
$25\%$     & 13.6 (13.0)& 1.8 (1.7)& 2.9 (2.9)& 1.3 (1.2)& 0.5 (0.5)\\ 
$50\%$     & 27.4 (25.2)& 7.2 (6.6)& 9.3 (8.8)& 3.2 (3.0)& 2.7 (2.5)\\ 
$75\%$     & 58.7 (51.7)& 19.4 (17.4)& 20.2 (18.6)& 7.5 (6.5)& 7.3 (6.5)\\ 
$90\%$     & 86.0 (80.8)& 47.9 (34.5)& 42.5 (32.8)& 15.0 (11.7)& 15.1 (11.8)\\ 
$95\%$     & 94.1 (89.7)& 73.2 (57.7)& 65.6 (50.9)& 33.5 (17.3)& 33.4 (17.8)\\ 
 \multicolumn{6}{c}{}\\ 
 \multicolumn{6}{c}{$H_{m,\sigma}$}\\ 
 \midrule 
Quantiles & $N=1$ &$N=2$ & $N=3$ & $N=4$ & $N=5$ \\  
 \cmidrule{2-6} 
$10\%$     & 3.4 (3.5)& 0.5 (0.6)& 0.4 (0.4)& 0.1 (0.1)& 0.1 (0.1)\\ 
$25\%$     & 9.5 (9.3)& 2.0 (1.9)& 1.6 (1.5)& 0.7 (0.6)& 0.4 (0.4)\\ 
$50\%$     & 21.1 (19.6)& 5.4 (5.0)& 5.0 (4.7)& 2.3 (2.1)& 1.8 (1.7)\\ 
$75\%$     & 47.7 (40.9)& 12.3 (10.7)& 11.5 (10.2)& 5.6 (4.7)& 5.0 (4.3)\\ 
$90\%$     & 78.2 (72.0)& 27.6 (19.8)& 24.6 (18.2)& 13.0 (8.8)& 11.4 (8.2)\\ 
$95\%$     & 88.1 (83.4)& 53.6 (30.7)& 46.9 (27.3)& 36.5 (13.9)& 27.6 (12.6)\\ 
\midrule 
Test points& 4561 (3994)& 4501 (3974)& 4401 (3924)& 4301 (3864)& 4187 (3788)\\ 
\bottomrule 
\end{tabular*} 
\end{center}
}}
\end{center}
\label{tab:novsim}
\end{table}

\begin{table}[tbp]
  \caption{Synthetic data: errors of Hermite models
    $\mathrm{H}_{\sigma}^{1,0}$ and $H_{\sigma}^{1,2}$}.
\begin{center}
  \vspace{2mm} \parbox{\textwidth}{\footnotesize The table reports
    selected quantiles of the distribution of empirical estimation
    errors (in percentage points) of the Hermite estimators
    $\mathrm{H}_{\sigma}^{1,0}$ and $H_{\sigma}^{1,2}$. Results are
    calculated on a set of synthetic data, generated using Hermite
    processes.  Figures in parenthesis refer to estimation errors
    obtained excluding values of $k$ outside the interval of observed
    ones.  \vspace{4mm}} \footnotesize{{\begin{center} 
 \begin{tabular*}{\textwidth}{@{\extracolsep{\fill}}lccccccc} 
 \toprule 
 \multicolumn{6}{c}{}\\ 
 \multicolumn{6}{c}{\textsc{Empirical distribution of Pricing errors}}\\ 
 \multicolumn{6}{c}{}\\ 
 \multicolumn{6}{c}{$H_{\sigma}^{1,0}$}\\ 
 \midrule 
Quantiles & $N=1$ &$N=2$ & $N=3$ & $N=4$ & $N=5$ \\  
 \cmidrule{2-6} 
$10\%$     & 2.5 (2.7)& 1.2 (1.2)& 0.7 (0.7)& 0.7 (0.7)& 0.9 (0.9)\\ 
$25\%$     & 7.3 (7.2)& 4.0 (3.9)& 2.5 (2.4)& 2.5 (2.4)& 2.5 (2.3)\\ 
$50\%$     & 20.7 (18.9)& 11.4 (10.4)& 8.8 (8.1)& 8.2 (7.6)& 7.2 (6.7)\\ 
$75\%$     & 74.4 (69.3)& 26.1 (22.4)& 24.6 (21.5)& 27.5 (23.1)& 21.3 (18.3)\\ 
$90\%$     & 94.8 (94.0)& 59.4 (46.9)& 69.6 (55.3)& 78.7 (71.2)& 72.8 (61.8)\\ 
$95\%$     & 97.4 (97.1)& 83.6 (73.6)& 88.0 (83.3)& 91.1 (89.0)& 89.8 (87.5)\\ 
 \multicolumn{6}{c}{}\\ 
 \multicolumn{6}{c}{$H_{\sigma}^{1,2}$}\\ 
 \midrule 
Quantiles & $N=1$ &$N=2$ & $N=3$ & $N=4$ & $N=5$ \\  
 \cmidrule{2-6} 
$10\%$     & 3.4 (3.5)& 0.6 (0.6)& 0.8 (0.8)& 0.4 (0.4)& 0.1 (0.1)\\ 
$25\%$     & 8.9 (8.5)& 1.9 (1.8)& 3.0 (3.0)& 1.3 (1.3)& 0.6 (0.5)\\ 
$50\%$     & 24.2 (21.1)& 6.7 (6.3)& 9.2 (8.8)& 3.3 (3.0)& 2.8 (2.6)\\ 
$75\%$     & 63.7 (56.4)& 19.5 (17.1)& 20.4 (18.6)& 7.6 (6.7)& 7.5 (6.6)\\ 
$90\%$     & 88.6 (84.1)& 50.0 (37.3)& 44.6 (34.2)& 15.3 (11.9)& 15.6 (12.2)\\ 
$95\%$     & 95.2 (92.4)& 74.9 (59.5)& 67.7 (54.6)& 33.7 (17.2)& 33.4 (18.6)\\ 
\midrule 
Test points& 4561 (3994)& 4501 (3974)& 4401 (3924)& 4301 (3864)& 4187 (3788)\\ 
\bottomrule 
\end{tabular*} 
\end{center}}}
\end{center}
\label{tab:agnsim}
\end{table}


\section{Concluding remarks}
We have analyzed the empirical performance of a class of
nonparametric models to price European options with fixed time to
maturity, based on approximating the density of logarithmic returns by
truncated series of weighted and scaled Hermite polynomials. As a term
of comparison we considered a simple Black-Scholes model coupled with
linear interpolation on the implied volatility curve.
The empirical performance of the methods, measured by off-sample
relative pricing error, is studied on one year of daily European put
options on the S\&P500 index. The results suggest that Hermite models
performs reasonably well for options with strike price not too far
away from the strike prices of observed option prices. This appears to
be the case across all calibration methods used. For options with
strike price far apart from the set of strike prices of observed
options, estimates obtained by the Hermite methods are less reliable
than those obtained by the simple nonparametric Black-Scholes method
mentioned above, and are in general not better otherwise. Therefore it
seems fair to say that Hermite methods can be useful in particularly
well-behaved situations, but cannot be considered reliable stand-alone
nonparametric pricing tools.
These qualitative observations are confirmed by a statistical exercise
conducted on synthetic data generated in terms of a class of
non-Gaussian stochastic processes (the Hermite processes).

\appendix

\section{Pricing formulas under extra integrability conditions}
\label{ssec:Hp2}
Assume that there exists $\delta>0$ such that
\[
  \widetilde{f} \colon x \mapsto e^{\sigma(1+\delta)\abs{x}}f(x) \in L^2.
\]
Let $(\widetilde{f}_N)$ be a sequence of function converging to
$\widetilde{f}$ in $L^2$ and define $(f_N)$ by
\[
  e^{\sigma(1+\delta)\abs{x}} f_N(x) = \widetilde{f}_N
  \qquad \forall N \geq 0.
\]
Lemma~\ref{lm:sigma} implies that there exists a
sequence $(\alpha_n) \in \ell^2$ such that
\[
  e^{\sigma(1+\delta)\abs{x}} f(x) = \sum_{n=0}^\infty
  \alpha_n h_n(\sqrt{2}x) e^{-x^2/2},
\]
therefore, setting
\[
  f_N(x) = \sum_{n=0}^N \alpha_n h_n(\sqrt{2}x)
  e^{-x^2/2 - \sigma(1+\delta)\abs{x}}
\]
for every $N \geq 0$, the sequence of functions defined by
$x \mapsto e^{\sigma(1+\delta)\abs{x}} (f_N(x)-f(x))$ converges to
zero in $L^2$ as $N \to \infty$.
Setting
\[
\zeta_+ := \frac{1}{\sigma} \Bigl(
\log \frac{k}{S_0} - m + \overline{q}t \Bigr),
\]
one has
\[
\E{(k-S_t)}^+ = \int_{-\infty}^{\zeta_+} \bigl( k - e^{\sigma x +
m - \overline{q}t} \bigr) f(x)\,dx,
\]
hence, approximating $f$ by $f_N$, we define
\begin{align*}
  \pi^N &:= \int_{-\infty}^{\zeta_+} \bigl( k - S_0 e^{\sigma x +
    m - \overline{q}t} \bigr) f_N(x)\,dx\\
  &= k \int_{-\infty}^{\zeta_+} f_N(x)\,dx - e^{-\overline{q}t}
  S_0\int_{-\infty}^{\zeta_+} f_N(x) e^{\sigma x + m} \,dx.
\end{align*}
We have
\[
  k \int_{-\infty}^{\zeta_+} f_N(x)\,dx = \sum_{n=0}^N \alpha_n k
  \int_{-\infty}^{\zeta_+} h_n(\sqrt{2}x) e^{-x^2/2 -
    \sigma(1+\delta)\abs{x}}\,dx,
\]
where, if $\zeta_+ \leq 0$, setting $\sigma_\delta:=\sigma(1+\delta)$
for notational compactness, \eqref{eq:inta} implies
\begin{align*}
\int_{-\infty}^{\zeta_+} h_n(\sqrt{2}x) e^{-x^2/2 - \sigma_\delta\abs{x}}\,dx
&= \int_{-\infty}^{\zeta_+} h_n(\sqrt{2}x) e^{-x^2/2 + \sigma_\delta x}\,dx\\
&= e^{\sigma_\delta^2/2} \int_{-\infty}^{\zeta_+-\sigma_\delta} 
     h_n(\sqrt{2}(x+\sigma_\delta)) e^{-x^2/2}\,dx.
\end{align*} 
Similarly, if $\zeta_+ \geq 0$, analogous computations yield
\begin{align*}
&\int_{-\infty}^{\zeta_+} h_n(\sqrt{2}x) e^{-x^2/2 - \sigma_\delta\abs{x}}\,dx\\
&\hspace{3em} = \int_{-\infty}^0 h_n(\sqrt{2}x) e^{-x^2/2 + \sigma_\delta x}\,dx
+ \int_0^{\zeta_+} h_n(\sqrt{2}x) e^{-x^2/2 - \sigma_\delta x}\,dx\\
&\hspace{3em} = e^{\sigma_\delta^2/2} \int_{-\infty}^{-\sigma_\delta}
h_n(\sqrt{2}(x+\sigma_\delta)) e^{-x^2/2}\,dx\\
&\hspace{4em} + e^{\sigma_\delta^2/2}
\int_{\sigma_\delta}^{\zeta_++\sigma_\delta} h_n(\sqrt{2}(x-\sigma_\delta))
e^{-x^2/2}\,dx.
\end{align*}
Moreover,
\[
e^{m-\overline{q}t} S_0 \int_{-\infty}^{\zeta_+} f_N(x) e^{\sigma x} \,dx
= \sum_{n=0}^N \alpha_n e^{m-\overline{q}t} S_0
\int_{-\infty}^{\zeta_+} h_n(\sqrt{2}x) e^{-x^2/2 - \sigma_\delta\abs{x} + \sigma x}\,dx,
\]
where, if $\zeta_+ \leq 0$, noting that $\sigma_\delta+\sigma =
2\sigma_{\delta/2}$,
\begin{align*}
  \int_{-\infty}^{\zeta_+} h_n(\sqrt{2}x) e^{-x^2/2 - \sigma_\delta\abs{x} + \sigma x}\,dx
  &= \int_{-\infty}^{\zeta_+} h_n(\sqrt{2}x) e^{-x^2/2 + 2\sigma_{\delta/2}x}\,dx\\
  &= e^{2\sigma_{\delta/2}^2} \int_{-\infty}^{\zeta_+-2\sigma_{\delta/2}} 
  h_n(\sqrt{2}(x+2\sigma_{\delta/2})) e^{x^2/2} \,dx,
\end{align*}
and, if $\zeta_+ \geq 0$,
\begin{align*}
&\int_{-\infty}^{\zeta_+} h_n(\sqrt{2}x) e^{-x^2/2 - \sigma_\delta\abs{x} + \sigma x}\,dx\\
&\hspace{3em}= \int_{-\infty}^0 h_n(\sqrt{2}x) e^{-x^2/2 + 2\sigma_{\delta/2}x}\,dx
+ \int_0^{\zeta_+} h_n(\sqrt{2}x) e^{-x^2/2 - \sigma\delta x}\,dx\\
&\hspace{3em}= e^{2\sigma_{\delta/2}^2} \int_{-\infty}^{-2\sigma_{\delta/2}}
h_n(\sqrt{2}(x+2\sigma_{\delta/2})) e^{x^2/2} \,dx
+ e^{\sigma^2\delta^2/2} \int_{\sigma\delta}^{\zeta_+ + \sigma\delta}
h(\sqrt{2}(x-\sigma\delta)) e^{-x^2/2}\,dx.
\end{align*}

\section{Hermite processes: weak convergence and simulation}
\label{app:HermProc}
We collect some facts about Hermite processes, using as main source
\cite{Taqqu:Herm} (see also \cite{DoMa}).

Let us first define (the class of) Hermite processes. To this purpose,
we need to fix some notation. Throughout this section,
\(k \in \enne\), \(k \geq 1\) and \(H \in \mathopen]1/2,1\mathclose[\)
are constants, and \(L\) is a slowly varying function at infinity that
is bounded on bounded intervals of
\(\mathopen]0,+\infty\mathclose[\). Furthermore, let
\[
  H_0 := 1 - \frac{1-H}{k},
\]
or, equivalently, \(H = k(H_0-1)+1\) (cf.~\cite[(1.7)]{Taqqu:Herm}),
and define the constant \(c = c(k,H_0)\) by
\begin{align*}
  c
  &= \biggl( \frac{k!(k(H_0-1)+1)(2k(H_0-1)+1)}%
    {\bigl(\int_0^\infty(u+u^2)^{H_0-3/2}du\bigr)^k} \biggr)^{1/2} \\
  &= \bigl( k!(k(H_0-1)+1)(2k(H_0-1)+1) \bigr)^{1/2}
    \biggl( \frac{\Gamma(3/2-H_0)}{\Gamma(H_0-1/2) \Gamma(2-2H_0)}
    \biggr)^{k/2}\\
  &= \bigl( k! H (2H-1) \bigr)^{1/2}
  \left( \frac{\Gamma(1/2 + (1-H)/k)}%
    {\Gamma(1/2 - (1-H)/k) \, \Gamma(2(1-H)/k)} \right)^{k/2}
\end{align*}
(cf.~\cite[(1.6)]{Taqqu:Herm}).

The Hermite process with parameters \(k\) and \(H\) is defined by
\[
  Z^k_H (t)= c \int_{\erre^k} \int_0^t
  \prod_{j=1}^{k} (s-y_i)_+^{-\left( \frac{1}{2} + \frac{1-H}{k}
      \right)} \,ds\,dW(y_{1})\, \cdots \,dW(y_{k}),
\]
where \(x_{+}\) denotes the positive part of $x$ and the integral is a
multiple Wiener-It\^o stochastic integral with respect to a Wiener
process \(W\) with parameter space $\erre$.  Then \(Z^k_H\) is a
mean-zero square-integrable process with stationary increments,
\(Z^k_H(0)=0\), and \(\E(Z^k_H(1))^2 = 1\). Moreover, \(Z^k_H\) is
self-similar with parameter \(H\), i.e. \(Z^k_H(t) = t^H Z^k_H(1)\) in
distribution for every \(t \in \erre_+\). Finally, for $k\geq 2$ the
process is not Gaussian.

\medskip

Let \(g \in L^2(\gamma)\) be a function such that
\(\gamma(g)=0\). Then \(g\) can be written as
\[
  g(x) = \sum _{j \geq 1} \alpha_j H_j(x), \qquad
  \alpha_j = \frac{1}{j!} \ip[\big]{g}{H_j}_{L^2(\gamma)}.
\]
%
The function \(g\) is said to have Hermite rank $k$ if
\(c_0=\cdots=c_{k-1}=0,\, c_k \neq 0\).  Since \(\gamma(g) = 0\), one
has \(k \geq 1\).

\medskip

Taqqu has proved in \cite[Theorem~5.5]{Taqqu:Herm} a convergence
theorems for integral functionals of a class of Gaussian processes
\(X\) that admits a representation of the type
\begin{equation}
  \label{eq:X}
  X_t = \frac{1}{\sigma}\int_\erre e(t-s)\,dW_s,
\end{equation}
where \(e\colon \erre \to \erre\) is a function satisfying a set of
conditions spelled out in \cite[{\S}2]{Taqqu:Herm}, that include
\[
\lim_{s \to \infty} \frac{e(s)}{s^{H_0-3/2}L(s)} = 1
\]
and \(\sigma := {\norm{e}}_{L^2(\erre)}\). Then one has, for any
\(g \in L^2(\gamma)\) of Hermite rank \(k\),
\[
  \lim_{x \to \infty} \frac{1}{A(x)} \int_0^{xt} g(X_s)\,ds
  = \alpha_k Z^k_H(t)
\]
in the sense of weak convergence of measures in \(C[0,1]\). Here
\(A(x)\) is a normalizing constant defined by
\[
  A(x) := \frac{k!}{\sigma^k c(k,H_0)} x^H L^k(x).
\]

\medskip

The process \(B^H:=Z^1_H\) is the usual fractional Brownian
motion. The centered stationary Gaussian process \(X\) defined by
\(X_t := B^H_t - B^H_{t-1}\) is called fractional Gaussian noise.  The
process \(X\) admits a representation of the type \eqref{eq:X}, with
\[
  e(t) = t^{H-3/2} L(t), \qquad
  L(t) =
  \begin{cases}
    0, & t \in \erre_-,\\
    t, & t \in [0,1],\\
    t^{3/2-H} \bigl( t^{H-1/2} - (t-1)^{H-1/2} \bigr),
      &t \in [1,\infty\mathclose[,
  \end{cases}
\]
for which then
\[
  \sigma = \frac{H-1/2}{c(1,H)},
\]
(see \cite[{\S}3]{Taqqu:Herm}).
As a consequence,
\begin{equation}
  \label{eq:serie}
  \lim_{N \to \infty} \frac{1}{A(N)} \sum_{i=1}^{[Nt]} g(X_i)
  = \alpha_k Z^k_H(t)
\end{equation}
in the sense of weak convergence of measures in the Skorokhod space
\(D[0,1]\) (cf. \cite[Theorem~5.6]{Taqqu:Herm}).

\smallskip

The convergence result \eqref{eq:serie} is amenable to practical
implementation, as the (well-known) covariance function of \(B^H\) is
\[
  (t_1,t_2) \mapsto \frac12 \Bigl( t_1^{2H} + t_2^{2H} - \abs{t_1-t_2}^{2H}
  \Bigr),
\]
hence, by an elementary computation, the (stationary) correlation
function of the corresponding fractional Gaussian noise \(X\) is given
by
\begin{equation*}
  \label{eq:rho}
  \rho(m) := \E X_tX_{t+m} = \frac12 \Bigl( (\abs{m}+1)^{2H} +
  \abs[\big]{\abs{m}-1}^{2H} - 2\abs{m}^{2H} \Bigr), \qquad m \in \erre.
\end{equation*}


\section{On the density of a cubic function of a Gaussian}
\label{app:dens}
Let \(h(x) = x^3 - 3x\) be the Hermite function of order three.
Setting
\[
  h_1 = h\big\vert_{\mathopen]-\infty,-1\mathclose[},
  \qquad h_2 = h\big\vert_{\mathopen]-1,1\mathclose[},
  \qquad h_3 = h\big\vert_{\mathopen]1,\infty\mathclose[},
\]
it is immediate to see that
\begin{itemize}
\item[(i)] \(h_1\) is a strictly increasing \(C^\infty\) homeomorphism
  of \(\mathopen]-\infty,-1\mathclose[\) to
  \(\mathopen]-\infty,2\mathclose[\);
\item[(ii)] \(h_2\) is a strictly decreasing \(C^\infty\)
  homeomorphism of \(\mathopen]-1,1\mathclose[\) to
  \(\mathopen]-2,2\mathclose[\);
\item[(iii)] \(h_3\) is a strictly increasing \(C^\infty\)
  homeomorphism of \(\mathopen]1,\infty\mathclose[\) to
  \(\mathopen]-2,\infty\mathclose[\),
\end{itemize}
and that the function \(h\) has a local maximum at \(-1\) and a local
minimum at \(1\). The inverse of the function \(h_j\), \(j=1,2,3\),
will be denoted by \(h_j^\leftarrow\).

Let \(Z\) a standard Gaussian random variable. Then the distribution
function of the random variable \(h(Z)\) can be written as
\[
G(y) :=
\begin{cases}
  \P(Z \leq h_1^\leftarrow(y)), & y \leq -2,\\[3pt]
  \P(Z \leq h_1^\leftarrow(y)) + \P(Z \leq h_3^\leftarrow(y))
  - \P(Z \leq h_2^\leftarrow(y)), & \abs{y} < 2,\\[3pt]
  \P(Z \leq h_3^\leftarrow(y)), & y \geq 2,
\end{cases}
\]
that is, denoting the distribution function of \(Z\) by \(\Phi\),
\[
G(y) :=
\begin{cases}
  \Phi \circ h_1^\leftarrow(y), & y \leq -2,\\[3pt]
  \Phi \circ h_1^\leftarrow(y) + \Phi \circ  h_3^\leftarrow(y)
  - \Phi \circ h_2^\leftarrow(y), & \abs{y} < 2,\\[3pt]
  \Phi \circ h_3^\leftarrow(y), & y \geq 2.
\end{cases}
\]
The inverse function theorem readily shows that the limits of
\(\bigl( h_1^\leftarrow \bigr)'\) and
\(\bigl( h_2^\leftarrow \bigr)'\) at \(2\), and the limits of
\(\bigl( h_2^\leftarrow \bigr)'\) and
\(\bigl( h_3^\leftarrow \bigr)'\) at \(-2\), are infinite, hence it is
not clear whether the square of the derivative of \(G\) is integrable
on neighborhoods of \(2\) and \(-2\).
To answer this question, note that the cubic equation
\(x^3 - 3x = y\), with \(y \in \mathopen]-2,2\mathclose[\), admits the
three real solutions
\[
  x_k = 2 \cos\Bigl( \frac13 \arccos y/2 - \frac{2\pi}{3} k \Bigr),
  \qquad k=0,1,2.
\]
In particular, there exists \(j \in \{1,2,3\}\) such that
\[
h_j^{\leftarrow}(y) = 2 \cos\Bigl( \frac13 \arccos y/2 \Bigr),
\]
for which
\[
  \bigl(h_j^{\leftarrow}\bigr)'(y) = \frac23 \sin\Bigl(
  \frac13 \arccos y/2 \Bigr) \frac{1}{\sqrt{1-y^2/4}}.
\]
This in turn implies
\[
  G'(y) = \frac{1}{\sqrt{2\pi}} \exp \bigl( - (h_j^\leftarrow(y))^2/2 \bigr)
          \, \bigl(h_j^{\leftarrow}\bigr)'(y),
\]
where \(\lim_{y \to 2} h_j^\leftarrow(y)=1\) and
\[
\bigl(h_j^{\leftarrow}\bigr)'(y) = \frac{2}{(2-y)^{1/2}(2+y)^{1/2}},
\]
hence \(G'(y)^2\) tends to infinity as \(y \to 2-\) as \(1/(2-y)\),
which is not integrable. This implies that the (unbounded) density of
\(h(Z)\) does not belong to \(L^2(\erre)\).


\section{Numerical implementation}
\label{app:num}
All numerical computations are done with Octave 7.1.0 on Linux, using
the Octave Forge packages \texttt{statistics} and
\texttt{optim}. Minimization with respect to $\sigma$ in
$\mathrm{H}_\sigma$ is done through the function \texttt{fminbnd},
with lower and upper bounds $0.1$ and $1$, respectively. Minimizations
in $\mathrm{H}_\sigma^{1,0}$ and $\mathrm{H}_\sigma^{1,2}$ are done
through the function \texttt{fminsearch}.

Minimization in $\mathrm{H}_\sigma^1$ is done through the function
\texttt{glpk}, i.e. through the GNU Linear Programming Kit (GLPK). In
about 10\% of the computations it returns no solution. For these
points the result produced by $\mathrm{H}_\sigma^{1,2}$ is used
instead. Without any constraint on $\alpha$, the GLPK algorithm
sometimes breaks down or returns very high values of $\alpha$ that
translate into unusable pricing estimates. By trial-and-error, we
determined that a reasonable bound on the absolute value of $\alpha$
is $10$, and we implemented such a constraint.
The issues with minimization via GLPK are much more severe in the case
of $\mathrm{H}_{m,\sigma}^1$. Simple bounds on the absolute value of
$\alpha$ do not help in this case. The problem appears to be the
``size'' of the matrix $\Psi$, which depends on the parameters $m$ and
$\sigma$. Sporadic crashes of \texttt{glpk} (as opposed to very
frequent ones) are obtained by running it conditional on the absolute
value of the determinant of $\Psi^\top\Psi$ being bounded by
$10^6$. However, the proportion of pricing estimates obtained this way
is only around 5\% of the total.


\bibliographystyle{amsplain}
\bibliography{ref, finanza}

\end{document}